\documentclass[sigconf, nonacm]{acmart}

\AtBeginDocument{%
  \providecommand\BibTeX{{%
    \normalfont B\kern-0.5em{\scshape i\kern-0.25em b}\kern-0.8em\TeX}}}

\setcopyright{acmcopyright}
\copyrightyear{2018}
\acmYear{2018}
\acmDOI{10.1145/1122445.1122456}

\acmConference[Woodstock '18]{Woodstock '18: ACM Symposium on Neural
  Gaze Detection}{June 03--05, 2018}{Woodstock, NY}
\acmBooktitle{Woodstock '18: ACM Symposium on Neural Gaze Detection,
  June 03--05, 2018, Woodstock, NY}
\acmPrice{15.00}
\acmISBN{978-1-4503-XXXX-X/18/06}

\usepackage{lipsum,adjustbox}

\usepackage{caption}
\usepackage{tabularx}
\usepackage{makecell}
\usepackage{graphicx}
\usepackage{tikz}
\usepackage{lipsum}
\usepackage{xcolor}
\usepackage{amsopn}
\usepackage{amsfonts}
\usepackage{csvsimple}

\usepackage{stackrel}
\usepackage{todonotes}
\usepackage{mathtools}

\usepackage{algorithm}
\usepackage[noend]{algpseudocode}

\usepackage{amsmath}

\usepackage{amssymb}
\usepackage[capitalise]{cleveref}
\usepackage{subcaption}

\usepackage{flushend}

\makeatletter
\renewcommand{\ALG@name}{\sc Algorithm}
\makeatother

\usetikzlibrary{positioning, shapes.geometric}
\usetikzlibrary{shadows}
\usetikzlibrary{positioning,arrows,shapes.arrows}

\usepackage{xspace}
\newcommand{\probsty}[1]{\textsc{#1}\xspace}
\newcommand{\algsty}[1]{\texttt{#1}\xspace}

\usepackage{xargs}
\usepackage{xifthen}
\usepackage{framed}
\usetikzlibrary{calc}

\newenvironment{tightcenter}
 {\parskip=0pt\par\nopagebreak\centering}
 {\par\noindent\ignorespacesafterend}

\usepackage{ctable}
\newlength{\RoundedBoxWidth}
\newsavebox{\GrayRoundedBox}
\newenvironment{GrayBox}[1]%
{\setlength{\RoundedBoxWidth}{\linewidth-4.5ex}
\def\boxheading{#1}
\begin{lrbox}{\GrayRoundedBox}
\begin{minipage}{\RoundedBoxWidth}%
}{%
\end{minipage}
\end{lrbox}%
\begin{tightcenter}%
\begin{tikzpicture}%
\node(Text)[draw=black!20,fill=white,rounded corners,%
inner sep=2ex,text width=\RoundedBoxWidth]%
{\usebox{\GrayRoundedBox}};
\coordinate(x) at (current bounding box.north west);
\node [draw=white,rectangle,inner sep=3pt,anchor=north west,fill=white]
at ($(x)+(10.5pt,.75em)$) {\boxheading};
\end{tikzpicture}
\end{tightcenter}\vspace{0pt}%
\ignorespacesafterend
}

\newenvironment{problem}[1]{\noindent\ignorespaces%
\FrameSep=6pt%
\parindent=0pt%
\begin{GrayBox}{\textsc{#1}}%
\newcommand\Input{Input:}%
\newcommand\Prob{Output:}%
\begin{tabular*}{\columnwidth}{@{\hspace{.25em}} >{\itshape} p{1.1cm} p{0.8\columnwidth} @{}}%
}{
\end{tabular*}%
\end{GrayBox}%
\ignorespacesafterend
}


\newtheorem{krr}{\KRfull\xspace}[]

\newenvironment{customkrr}[1]
  {\innercustomthm}
  {\endinnercustomthm}


\def\AEWCDfull{\probsty{Annotated EWCD}}
\def\AEWCD{AEWCD\xspace}
\def\EWCD{EWCD\xspace}
\def\EWCDfull{\probsty{Exact Weighted Clique Decomposition}}
\def\WECP{WECP\xspace}
\def\WECPfull{\probsty{Weighted Edge Clique Partition}}
\def\AWECPfull{\probsty{Annotated WECP}}
\def\AWECP{\probsty{AWECP\xspace}}

\def\WBSDDWfull{\probsty{Binary Symmetric Weighted Decomposition with Diagonal Wildcards}}
\def\WBSDDWprob{\probsty{BSWD-DW}}

\def\WBSDDW{BSWD-DW\xspace}

\def\KRfull{{\sc Kernel Rule}\xspace}
\def\KR{{\sc K-rule}\xspace}

\def\KRs{{\sc K-rule}s\xspace}
\def\SRfull{{\sc Search Rule}\xspace}
\def\SR{{\sc S-rule}\xspace}
\def\SRsfull{{\sc Search Rules}\xspace}
\def\SRs{{\sc S-rule}s\xspace}

\def\KC{$K_C$\xspace}
\def\SC{$S_C$\xspace}
\def\SCS{$S_{C^*}$\xspace}
\def\KD{$K_D$\xspace}
\def\SD{$S_D$\xspace}


\def\FillNB{\algsty{FillNonBasis}}
\def\Comp{\algsty{iWCompatible}}
\def\InferCWLp{\algsty{InferCliqWts-LP}}

\def\AlgLp{\algsty{CliqueDecomp-LP}}

\def\cricca{\algsty{cricca}}
\def\criccas{\algsty{cricca*}}
\def\decaf{\algsty{DeCAF}}


\def\cw{\gamma}

\def\es{\stackrel{\star}{=}}
\def\basis{\widetilde{B}}

\def\R{\mathbb{R}}
\def\bin{\left\{ 0,1 \right\}}


\newcommand{\notes}[1]{{}} 



\begin{document}

\title{Faster Decomposition of Weighted Graphs into Cliques using Fisher's Inequality}

\author{Shweta Jain}
\affiliation{\institution{University of Utah}
  \city{Salt Lake City}
  \state{Utah}
  \country{USA}}
\email{shweta.jain@utah.edu}

\author{Yosuke Mizutani}
\affiliation{\institution{University of Utah}
  \city{Salt Lake City}
  \state{Utah}
  \country{USA}}
\email{yos@cs.utah.edu}

\author{Blair Sullivan}
\affiliation{\institution{University of Utah}
  \city{Salt Lake City}
  \state{Utah}
  \country{USA}}
\email{sullivan@cs.utah.edu}

\begin{abstract}
Mining groups of genes that consistently co-express is an important problem in biomedical research, where it is  critical for applications such as drug-repositioning and designing new disease treatments. Recently, Cooley et al. modeled this problem as \EWCDfull (\EWCD) in which, given an edge-weighted graph $G$ and a positive integer $k$, the goal is to decompose $G$ into at most $k$ (overlapping) weighted cliques so that an edge's weight is exactly equal to the sum of weights for cliques it participates in.
They show \EWCD is fixed-parameter-tractable, giving a $4^k$-kernel alongside a backtracking algorithm (together called \cricca) to
 iteratively build a decomposition. Unfortunately, because of inherent exponential growth in the space of potential solutions, \cricca is typically able to decompose graphs only when $k \leq 11$.

In this work, we establish reduction rules that exponentially decrease the size of the kernel (from $4^k$ to $k2^k$) for \EWCD. In addition, we use insights about the structure of potential solutions to give new search rules that speed up the decomposition algorithm. At the core of our techniques is a result from combinatorial design theory called Fisher's inequality characterizing set systems with restricted intersections. We deploy our kernelization and decomposition algorithms (together called \decaf) on a corpus of biologically-inspired data and obtain over two orders of magnitude speed-up over \cricca. As a result, \decaf scales to instances with $k \geq 17$.

\notes{
\begin{enumerate}
\item Decomposition of graphs into cliques is an important problem.
\item Describe the problem in 1 line
\item Describe in one line what the existing kernelization technique achieves
\item Describe what the decomposition algorithm achieves.
\item Mention what we provide: a smaller kernel and new rules for the algorithm that improve the performance both in theory and in practice.
\item Mention how much improvement in run time we are able to achieve. If we have numbers for scalability we can give those optionally.
\item Size: 2 small paragraphs or 1 big paragraph
\end{enumerate}
}
\end{abstract}

\maketitle

\begin{CCSXML}
<ccs2012>
   <concept>
       <concept_id>10003752.10003809.10003635</concept_id>
       <concept_desc>Theory of computation~Graph algorithms analysis</concept_desc>
       <concept_significance>500</concept_significance>
       </concept>
 </ccs2012>
\end{CCSXML}

\ccsdesc[500]{Theory of computation~Graph algorithms analysis}
\ccsdesc[500]{Mathematics of computing~Discrete mathematics}

\keywords{cliques, decomposition, weighted graphs, Fisher's inequality, gene modules, co-expression}

\notes{

\begin{enumerate}
\item Describe the biological setup: where the problem arises in practice and what a clique decomposition means in that setup. (this can overlap significantly with the Introduction and Motivating Biological Problem sections of Madi's paper)
\item It would be nice if we can give another setup in which this problem arises, esp. since this is a data science venue and not a bio venue.
\item mention that the problem is NP-Hard
\item in a few sentences mention what we provide. Give a box plot, n vs time (in sec) comparing our algorithm with Madi's. Maybe we should think of a name for the algorithm?
\item like Madi's paper, in a small paragraph give an outline of the paper
\item Size: 1.5 columns (so page 1 ends here)

\end{enumerate}
}

\notes{
    Color guide:
    \begin{itemize}
    \item orange: Shweta's comments
    \item purple: Notes about what I intend to write
    \item black: Actual paper content
    \end{itemize}
}

\section{Introduction}

Network analysis has proven to be a very effective tool in biomedical research, in which phenomena such as the interactions between proteins and genes find natural representation as graphs~\cite{ColladoTorres2017,Lachmann2018,Venkatesan2009,Greene2015}. In gene co-expression analysis, for example, vertices represent genes and edges represent pairwise correlation between genes. Scientists are often interested in finding groups (modules) of genes that consistently co-act, which manifest as dense subgraphs or cliques in co-expression networks. The discovery of such modules is critical in understanding disease mechanisms and in the development of new therapies for diseases, especially when the primary genes associated with a disease may not be amenable to drugs~\cite{Leeuw2015,Nelson2015-ag,yan2007graph,Menche2015,Dozmorov2020}.

A recent line of work ~\cite{cooley2021,feldmann2020} models the module identification problem as \EWCDfull (\EWCD). In this problem, we are given a positive integer $k$ and a graph $G$ whose edges have positive weights. The goal is to find a decomposition of the vertices of the graph into at most $k$ positive-weighted (possibly overlapping) cliques such that each edge participates in a set of cliques whose weights sum to its own.
The cliques containing edge $(u,v)$ represent the modules in which genes $u$ and $v$ co-express, and clique weights correspond to the module's strength of effect on co-expression. Note that one can obtain a trivial decomposition by assigning every edge to its own 2-clique with matching weight, but this does not lead to any useful insights about the system. Hence, previous work has relied on the principle of parsimony and aimed to find the smallest number of cliques into which the graph is decomposable.\looseness-1

While \EWCD is NP-Hard, Cooley et al.~\cite{cooley2021} recently showed that it admits a \emph{kernel}\footnote{the exact type of kernelization in~\cite{cooley2021} is called a \emph{compression}: problem instances are reduced to equivalent instances of a closely-related problem} of size $4^k$. A kernelization algorithm is a polynomial-time routine that produces a (smaller) \emph{equivalent} instance -- i.e. the kernelized instance is a YES-instance iff the given instance is a YES-instance.  The kernelization technique of Cooley et al. reduces an arbitrary size instance of \EWCD to an equivalent instance of (an annotated version of \EWCD) with at most $4^k$ vertices. Cooley et al.\ also describe an algorithm for obtaining a valid decomposition of a kernelized instance (if one exists). In practice, their kernelization and decomposition algorithms (together called \cricca\footnote{They also give an integer partitioning-based decomposition algorithm for the restricted case of integral weights.}) are able to solve \EWCD for graphs with $k \leq 11$ cliques in less than an hour. However, many co-expression networks have dozens of modules. Thus, a natural question is:

\textit{Does there exist a smaller kernel and/or a faster decomposition algorithm for the \EWCDfull problem?}

We answer this question in the affirmative, giving a $k2^k$-kernel and a faster decomposition algorithm (together called \decaf) which in practice give at least two orders of magnitude reduction in running time over \cricca\footnote{To be precise, over \criccas, an optimized version of \cricca}. Our kernelization technique uses a generalization of Fisher's inequality (from combinatorial design theory). We implement our algorithms and empirically evaluate them on an expansion of the corpus used in~\cite{cooley2021}, demonstrating significantly improved practicality. We first summarize our contributions, then briefly describe the key ideas behind our approaches in~\cref{sec:ideas}.




\textbf{Smaller Kernel:} We give a new kernel reduction rule (\KR 2+) that leads to a kernel of size at most $k2^k$. This is an exponential reduction in the size of the kernel compared to that of \cricca which gives a kernel of size $4^k$ in the worst case. In practice, the kernel obtained using \decaf is at least an order of magnitude smaller than the kernel obtained using \cricca (Figure~\ref{fig:exp-kernel}), which then reduces runtime for the downstream decomposition algorithm.


\textbf{New Search Rules:} The decomposition approach for \EWCD given in ~\cite{cooley2021} is a backtracking algorithm that iteratively searches the space of potential solutions. Every time the algorithm builds a partial solution, it invokes an LP-solver to determine the weights of the cliques. Based on our insights about the structure of such solutions, we are able to design several new \SRsfull (\SRs) that prune away large parts of the search tree reducing the number of times the LP-solver needs to be called. In conjunction with the smaller kernel, these lead to upto three orders of magnitude decrease in the number of runs of the LP-solver (Figure~\ref{fig:lpruns}).

\textbf{Faster Solution to \EWCD:} Figure~\ref{fig:runtime} shows the ratio of the total time taken by \decaf and the total time taken by \cricca for solving \EWCD for several graphs with varying ground-truth $k$. Because of the smaller kernel and new \SRs, we are able to obtain upto two orders of magnitude reduction in the total running time, and the amount of reduction increases as $k$ grows larger.

\textbf{Scale to Larger $k$:} \decaf enables decomposition\footnote{subject to a 3600s timeout (matching that in ~\cite{cooley2021}).} of graphs with $k$ over $50\%$ larger than \cricca (see Figure~\ref{fig:exp-yesno}). 

\begin{figure}
    \centering
    \includegraphics[width=0.45\textwidth]{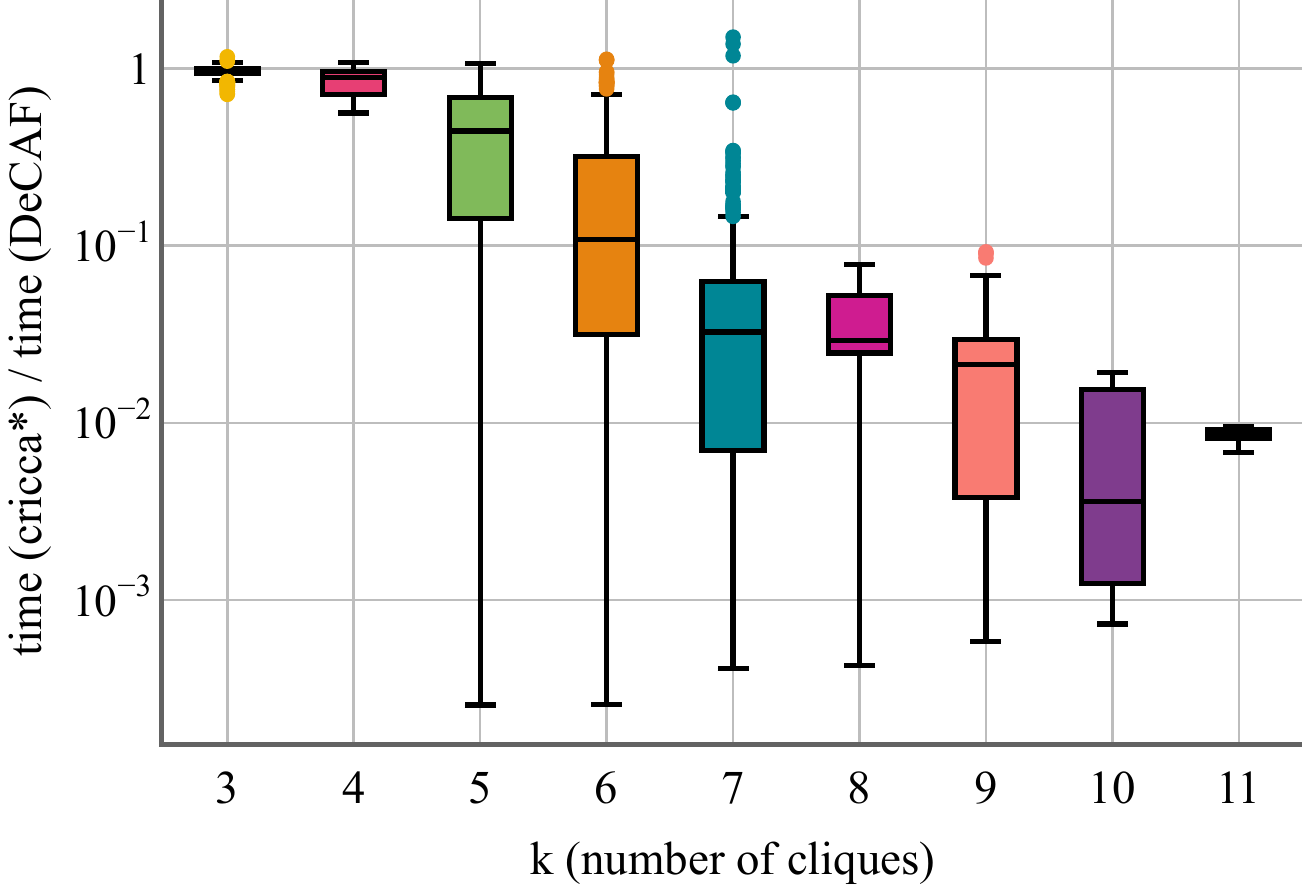}%
    \caption{ \decaf improves the runtime of \criccas by two orders of magnitude. We plot the ratio of execution time on all instances that completed in 3600s with both algorithms. Instances are binned by $k$, the number of cliques in the desired decomposition, highlighting that the improvement in runtime increases with the parameter value.}%
\label{fig:runtime}%
\end{figure}

\section{Preliminaries} \label{sec:prelimineries}

\notes{
\begin{enumerate}
\item Annotated and matrix formulation of the problem.
\item Explain *-equivalence
\item Explain what blocks are
\item Question: Are we going to use the term ``signature''? Madi's paper calls this a row vector.
\item State Fisher's inequality and informally describe what it means. (may move this to the section on kernel).
\item Size: 1 column. 2.75 pages end here.
\end{enumerate}
}

We begin by formally defining \EWCD, first introduced by Cooley et al. as a combinatorial model for discovering modules in gene co-expression graphs~\cite{cooley2021}.

\begin{problem}{\EWCDfull}
\Input & a graph $G = (V,E)$, a non-negative weight function $w_e$ for $e\in E$, and a positive integer $k$.\\
\Prob & a set of at most $k$ cliques $C_1, \ldots C_k$ with weights $\cw_1, \ldots \cw_k\in \R^+$ such that
$w_{uv} = \sum_{i:{uv} \in C_i} \cw_i$ for all $uv \in E$ (if one exists, otherwise output NO).\\
\end{problem}

To solve this problem, the authors of~\cite{cooley2021} also introduced a generalization called \AEWCDfull (\AEWCD) that allowed some vertices to have positive weights.

\begin{problem}{\AEWCDfull}
\Input & a graph $G = (V,E)$, a non-negative weight function $w_e$ for $e\in E$, a special set of vertices $S\subseteq V$, a non-negative weight function $w_v$ for $v\in S$, and a positive integer $k$.\\
\Prob & a set of at most $k$ cliques $C_1, \ldots C_k$ with weights $\cw_1, \ldots \cw_k\in \R^+$ such that
$w_{uv} = \sum_{i:{uv} \in C_i} \cw_i$ for all $uv \in E$ and
$w_{v} = \sum_{i:{v} \in C_i} \cw_i$
for all $v \in S$
(if one exists, otherwise output NO).\\
\end{problem}

Thus, \EWCD is the special case of \AEWCD when the set $S=\varnothing$. The authors showed that an instance of \EWCD can be reduced to an equivalent instance of \AEWCD with at most $4^k$ vertices~\cite{cooley2021}.
Their reduction techniques are based on those of Feldman et al.~\cite{feldmann2020}, who consider a closely related problem \WECPfull (\WECP) and its generalization \AWECPfull (\AWECP). \WECP is the special case of \EWCD when the clique weights are restricted to be $1$. Specifically, for a graph $G$ with edge weights $w_e$ and $k \in \mathbb{Z}^+$, $(G,k)$ is a YES-instance of \WECP iff there is a multiset of at most $k$ cliques such that each edge appears in exactly $w_e$ cliques.




Both these works heavily use linear algebraic techniques, and give {equivalent} matrix problem formulations in which matrices are allowed to have wildcard entries denoted by $\star$.
For $a,b\in \R\cup \left\{ \star \right\}$, let $a\es b$ if either $a=b$ or $a=\star$ or $b=\star$.
For matrices $M1$ and $M2$, we write $M1\es M2$ if $M1_{ij}\es M2_{ij}$ for each $i,j$.
The reformulation of \AEWCD (given by~\cite{cooley2021}) is called \WBSDDWfull (\WBSDDW).

\begin{problem}{\WBSDDWprob}
\Input &
a symmetric matrix $A\in \left(\R^+_0\cup\{\star\}\right)^{n\times n}$ with wildcards appearing on a subset of diagonal entries, and a positive integer $k$\\
\Prob &
a matrix $B\in \bin^{n\times k} $ and a diagonal matrix $W\in (\R_0^+)^{k\times k}$ such that
${A\es BWB^T}$ (if such (B,W) exist, otherwise output NO).\\
\end{problem}

Essentially, the goal is to find an $n \times k$ binary matrix $B$ and a $k \times k$ diagonal matrix with non-negative elements, $W$ such that ${A\es BWB^T}$. The matrix $A$ represents the weighted adjacency matrix of $G$ i.e. $A_{i,j}$ represents the weight of the edge $(i,j)$. If edge $(i,j)$ does not exist in $G$, then $A_{i,j}=0$. In the diagonal matrix $W$, each column (and row) represents a clique from the solution and the element $W_{i,i}$ represents the weight of the $i^{th}$ clique. The wildcard entries in $A$ are used for vertices with no weight restrictions. \EWCD is thus the special case of \WBSDDW where all the diagonal entries are wildcards.

We say that two distinct vertices $u$ and $v$ in G are $\star$-twins if they are
adjacent and satisfy $A_u \es A_v$ . We partition the vertices of $G$ (and correspondingly the rows of $A$) into sets of vertices called \textit{blocks} such that vertices $u$ and $v$ belong to the same block iff $u$ and $v$ are $\star$-twins. ~\cite{feldmann2020} showed that blocks are essentially equivalence classes.

We use $M_i$ to represent the $i^{th}$ row of a matrix $M$. 
The row vector $B_i$ represents the membership information of the $i^{th}$ vertex, i.e. $B_{i,j}$ represents whether the vertex $i$ is a member of the $j^{th}$ clique. We call $B_i$ the \textit{signature} of the vertex $i$. 

\subsection{Related Work} \label{sec:related}
\notes{
\begin{enumerate}
\item Prior work in clique decomposition. Heavy overlap with Madi's paper though not as much emphasis on the time complexity.
\item Maybe some work in biology that uses decomposition into structures other than cliques (if such work exists)
\item Size: about 0.5 to 0.75 columns
\end{enumerate}
}

There is a rich history of work emphasizing the importance of mining patterns in gene co-expression analysis ~\cite{xiao2014multi,yan2007graph,Visscher2017-zd,Nelson2015-ag} in which information about pairwise correlations for all genes in organisms~\cite{Mercatelli2020} is used to derive useful knowledge about sets of genes whose expression is consistently modulated across the same tissues or cell types. Typically, unsupervised network-based learning approaches~\cite{Mao2019,Taroni2019,Leeuw2015} are used for mining such modules. The first combinatorial
model of module identification was \EWCDfull (\EWCD), introduced in~\cite{cooley2021}. They gave a $4^k$-kernel and two parameterized algorithms for obtaining the decomposition of a graph into weighted cliques, one based on linear programming (LP) that works for real-valued weights, and an integer partitioning-based algorithm for the restricted case of integral weights. Both algorithms performed comparably so we use the unrestricted LP-based algorithm.

If clique weights are restricted to all being 1, \EWCD is equivalent to \WECPfull (\WECP) as studied by ~\cite{feldmann2020}. Their work builds upon the linear algebraic techniques of Chandran et al.~\cite{Chandran2016Biclique} for solving the \probsty{Biclique Partition} problem. \WECP itself generalizes \probsty{Edge Clique Partition} (ECP)~\cite{ma1988complexity} which seeks a set of cliques containing each edge at least once (but no constraint on the maximum number of occurrences). It is known that ECP admits a $k^2$-kernel in polynomial time~\cite{MujuniRosamond2008Kernel}.

On the matrix factorization side, where~\cite{cooley2021} modeled \EWCD as the \WBSDDW problem, several other  minimization objectives have been studied.  Zhang et al.~\cite{zhang2013symmetricBMF} minimized ${\|A-BB^T\|_2^2}$ and Chen et al.~\cite{chen2021Instahide} considered ${\|A-BB^T\|_0}$. However, neither of these formulations allow wildcard entries, and hence they are unable to model the clique decomposition problem. The \probsty{Off-Diagonal Symmetric Non-negative Matrix Factorization} Problem studied by Moutier et al.~\cite{offdiagonal} allows diagonal wildcards but also allows $B$ to be any non-negative matrix (not just binary).

\section{Main ideas} \label{sec:ideas}
\notes{
\textbf{Kernel}
\begin{enumerate}
\item Describe briefly what kernelization means.
\item Describe informally the two reduction rules from Feldman's paper
\item Mention that Madi's paper shows that these RRs apply also to their setup.
\item Mention the running time.
\item Dig deeper into why the RRs from Feldman's paper worked: if the graph is a YES instance, then there exists a solution in which all vertices in blocks with $> 2^k$ vertices must have the same signature. This means that the weight of edge $(u,v)$ where $u,v$ are in the block must equal the number of cliques that $u$ and $v$ are involved in in the solution. The weight we set for vertex $v$ when we retain $v$ and remove the rest ensures this important condition is enforced.
\item Such blocks are called blocks of identical twins and others are called fraternal.
\item Mention that fraternal twins have to have different signatures, briefly describe why.
\item We observe that the above reduction can be applied to any block of identical twins. So the main question is: how to distinguish if a block is identical or fraternal?
\item Our central claim is that if a graph is a YES instance then any block with $>k$ vertices must be a block of identical twins.
\item We essentially show that if a block were to be of size $\geq k$ and was a block of fraternal twins, then we could need more than $\geq k$ cliques in the decomposition and we use a celebrated result called Fisher's inequality to prove this.
\item Give final kernel size.
\item Mention that in practice we observe that we are able to get a drastic reduction in instance size.
\end{enumerate}

\textbf{Performance tuning rules}
\begin{enumerate}
\item Describe the working of the decomposition algorithm briefly: Recursive(?) backtracking algorithm. Tries all possible ways of assigning signatures to vertices one by one, backtracking in case of an incompatibe assignment, until either it runs out of signatures to try (NO instance) or finds a valid assignment (YES instance).
\item Finding rules that can detect early on that an assignment cannot lead to a solution can prevent us from exploring large parts of the recursion tree.
\item We introduce 2 new performance tuning (PT, we should find some other name for these though) that help rule out incompatible assignments.
\item Describe first rule. Fairly straightforward
\item Describe second rule (assuming we have renamed rule 3 as rule 2. Please see comment about this on Slack)
\item Question: what do we want to say about running time? Does it go up by a factor of O(n)? Depends on implementation.
\item Mention that in practice these rules significantly speed-up the algorithm.
\end{enumerate}

2.25 columns. Page 5 ends here
}

The starting point for this work is the algorithm of ~\cite{cooley2021} for \EWCD. The algorithm first preprocesses the graph to remove isolated vertices from $G$ and adjusts $k$ accordingly (each isolated vertex must be a unique clique in every valid decomposition). It then uses the kernelization techniques of ~\cite{feldmann2020} to obtain a smaller equivalent instance of \AEWCD. On this kernelized instance, it runs a (parameterized) clique decomposition algorithm that searches the space of clique membership signatures for each vertex. Note that the kernel reduction rules of ~\cite{feldmann2020} were proposed for the \AWECP problem in which clique weights are restricted to be $1$. ~\cite{cooley2021} showed that the same rules can be used to obtain a kernel for \AEWCD. We will follow a similar sequence by first giving a kernel reduction rule for \AWECP and then showing that it can also be used for \AEWCD. Thus, we first consider the setup of the \AWECP problem. 

The kernelization technique of ~\cite{feldmann2020} works by applying two reduction rules to blocks of $\star$-twins which either prune away many of the vertices from the block or act as easy checks for a NO-instance.

\begin{krr}[\cite{feldmann2020}]
    If there are more than $2^k$ blocks, then output that the instance is a NO-instance.
\end{krr}

\begin{krr} [\textnormal{informal,} \cite{feldmann2020}] If there is a block $D$ of size greater than $2^k$, then pick two distinct $i, j \in D$. Convert $G$ into an instance $G'$ of \AEWCD by setting the weight of vertex $i$ equal to the weight of edge $(i,j)$  and removing every vertex in $D$ from $G$ except $i$. Then $(G,k)$ is a YES-instance iff $(G', k)$ is a YES-instance.
\end{krr}


For \KRfull 1 (\KR 1), the authors ~\cite{feldmann2020} show that if $G$ is a YES-instance, then any pair of vertices $u,v$ in $G$ such that $u$ and $v$ have different signatures in the solution must belong to different blocks. Since there can be at most $2^k$ possible signatures (binary vectors of length $k$) there can be at most $2^k$ blocks in a YES-instance.

For \KRfull 2 (\KR 2), they first prove that if $G$ is a YES-instance then for any block of vertices, there exists a solution in which all the vertices from the block either have the same signature or all have pairwise distinct signatures. Since there can be at most $2^k$ distinct signatures, if the number of vertices in a block is greater than $2^k$ then by the pigeonhole principle, the signatures cannot all be distinct. Hence, there must exist a solution in which all the vertices in the block have the same signature. In such a case, we can keep just one representative vertex from the block to get a smaller instance.

Note that this means that any two vertices of such a block must participate in exactly the same set of cliques. For this to happen, for vertices $u,v$ in the block, the weight $w$ of the edge $(u,v)$ must equal the number of cliques that contain $u$ and $v$ in the solution. Any solution for the reduced instance that is extendable to a solution for the original instance must ensure that the representative vertex $v$ is part of exactly $w$ cliques. To enforce this condition the weight of the representative vertex is set to $w$ in the reduced instance.

Thus, after applying \KRs 1 and 2 there are at most $2^k$ blocks with at most $2^k$ vertices in each block. In this way, the authors obtain a kernel of size $4^k$. Although this kernel was proposed by ~\cite{feldmann2020} for \WECP in which the clique weights have to all be 1, ~\cite{cooley2021} showed that the same techniques apply even when the cliques are allowed non-unit weights.

\textit{Our main insight is that \KR 2 applies to a broader set of blocks.} Specifically, we show that if $G$ is a YES-instance then for any block with more than $k$ vertices, all vertices of the block must have identical signatures. For this, we use Fisher's inequality to prove that if a block has $\geq k+1$ vertices and not all vertices of the block participate in the same cliques, then the number of cliques required to cover the edges in the block is at least $k+1$, contradicting the assumption that the given instance is  a YES-instance. Thus, we apply \KR 2 to every block with $\geq k+1$ vertices. Since there are at most $2^k$ blocks and each block has at most $k$ vertices, we obtain a $k2^k$-kernel. Although this gives a kernel for the \WECP problem (in which cliques are constrained to have unit weight), similar to ~\cite{cooley2021} we show that this works for \EWCD as well. The running time of the kernelization algorithm remains unchanged at $O(n^2\log n)$. In practice, we observe that this \KR gives an order of magnitude reduction in the size of the kernel, which subsequently helps speed up the decomposition.

\textbf{\SRsfull (\SRs):} The decomposition algorithm of ~\cite{cooley2021} uses backtracking to assign signatures to vertices one by one. While doing so, it checks to make sure that the new signature is compatible with the vertices that have already been assigned a signature. If no compatible signature is found for a vertex, the algorithm backtracks. It continues this process until all vertices have a valid assignment or it determines that no valid assignment can be found. Having rules that can quickly detect that the current (partial) assignment cannot be extended to a valid decomposition helps us prune away branches of the search tree and reduce the running time.

Our first \SR pertains the order in which we consider the vertices for signature assignment, which
can significantly affect how much of the search tree must be explored. We tested several strategies, and found that a \texttt{push\_front} approach in which vertices from reduced blocks are considered before those in non-reduced blocks was most effective. A justification for this strategy is in~\cref{sec:decomposition}, and the empirical evaluation is shown in \cref{subsec:exp_order}.

Our second \SR comes from the straightforward observation that when assigning a signature to a vertex, one must respect its non-neighbor relationships. More specifically, when finding a signature for a vertex $v$, for all $u$ such that $(u,v)$ is not an edge, $B_uB_v^T = 0$. That is, if $u$ and $v$ are non-neighbors, they cannot share a clique. Thus, we only test those signatures for $v$ that have no cliques in common with the signatures of its non-neighbors.

 Our third and final \SR makes use of the fact that if the graph is a YES-instance then there exists a solution in which all vertices in each block have either identical signatures or pairwise distinct signatures. Thus, when assigning signatures to the vertices of a block, either we assign a unique signature to all vertices in the block or assign the same signature to all vertices in the block. This eliminates assignments in which a block has the same signature appearing on more than $1$ but but not all vertices in the block.

The improved kernel along with these \SRs speeds up decomposition by two orders of magnitude. We give formal proofs of our results in the next two sections.

\section{Smaller kernel} \label{sec:kernel}

Similar to ~\cite{cooley2021}, we will first consider the setup of the \AWECP problem (in which the cliques are constrained to have weight 1), and prove correctness of our new reduction rule.
We will then show that the rule remains valid in the case of \AEWCD. We begin by restating an important lemma from~\cite{feldmann2020}.

\begin{lemma}[Restated Lemma 8 from ~\cite{feldmann2020}] \label{lem:submatrix}
For a block $D$, the entries of the sub-matrix $A_{D,D}$ are all same except for wildcards.
\end{lemma}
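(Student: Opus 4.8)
The plan is to exhibit a single real value $c$ such that every entry of $A_{D,D}$ is either equal to $c$ or is a wildcard. The crucial enabling observation is that, in the \WBSDDW formulation, wildcards occur \emph{only} on (a subset of) the diagonal of $A$. Consequently every off-diagonal entry of $A$ is a genuine non-negative real, and for off-diagonal positions the relation $\es$ collapses to ordinary equality. This is exactly what lets me chain comparisons together: the relation $\es$ is not transitive in general, so without confining wildcards to the diagonal the argument would break down. This single point is the one that needs care.

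First I would treat the off-diagonal entries of $A_{D,D}$. Fix a vertex $u \in D$. For any two other vertices $v, v' \in D \setminus \{u\}$, the pair $v, v'$ are $\star$-twins, so $A_v \es A_{v'}$ and in particular $A_{v,u} \es A_{v',u}$ in column $u$. Since $u \neq v$ and $u \neq v'$, both entries are off-diagonal, hence real, so in fact $A_{v,u} = A_{v',u}$; by symmetry of $A$ this says $A_{u,v} = A_{u,v'}$. Thus all off-diagonal entries lying in row $u$ of $A_{D,D}$ share a common value, say $c_u$. To see these per-row constants agree, take distinct $u, u' \in D$: choosing $v = u'$ gives $c_u = A_{u,u'}$, and symmetrically $c_{u'} = A_{u',u}$, whence $c_u = c_{u'}$ by symmetry of $A$. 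Calling the common value $c$, every off-diagonal entry of $A_{D,D}$ equals $c$.

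Next I would handle the diagonal entries $A_{u,u}$ for $u \in D$. If $|D| = 1$ the submatrix is a single entry and the claim is vacuous, so assume there is some $v \in D$ with $v \neq u$. Since $u$ and $v$ are $\star$-twins, $A_u \es A_v$, and reading off column $u$ gives $A_{u,u} \es A_{v,u}$. The entry $A_{v,u}$ is off-diagonal and therefore equals $c$ by the previous step, so $A_{u,u} \es c$; hence $A_{u,u}$ is either the wildcard $\star$ or equal to $c$. Combining the two cases, every entry of $A_{D,D}$ is either $c$ or a wildcard, which is precisely the assertion of the lemma.

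I expect the only genuine obstacle to be the non-transitivity of $\es$, which I sidestep by exploiting that wildcards are restricted to the diagonal; once that is in hand, the remaining steps are routine consequences of the $\star$-twin definition and the symmetry of $A$.
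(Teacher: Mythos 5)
Your proof is correct. Note that the paper itself does not prove this lemma---it is imported verbatim as Lemma~8 of Feldmann et al.~\cite{feldmann2020}---so there is no in-paper argument to diverge from; your self-contained derivation is sound, and you correctly identify the one delicate point: since wildcards are confined to the diagonal in the \WBSDDWprob{} formulation, the relation $\es$ reduces to genuine equality off the diagonal, which is what licenses chaining the pairwise $\star$-twin comparisons into a single common value $c$ before handling the diagonal entries separately.
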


An important property of $\star$-twins implied by \cref{lem:submatrix} is: if vertices $u$ and $v$ are $\star$-twins and $u$ is annotated, then $A_{uu} = A_{uv}$. In other words, the weight of $u$ must equal the weight of the edge $(u,v)$ because otherwise $A_{uv}$ is not $\es$ to $A_{vv}$, contradicting the fact that $u$ and $v$ are $\star$-twins. Any block (definitionally) consists exclusively of $\star$-twins, thus, the weight of every edge within the block must be the same. In other words, the vertices in the block form a complete subgraph with uniform edge-weights.

An important observation about any solution for \WECP (\EWCD) is that since the vertices do not have weights, if the solution contains a singleton clique (i.e. a clique of size 1) and if we remove this clique the remaining set of cliques also forms a solution. (Recall that isolated vertices have already been removed in the preprocessing step). Thus, \textbf{in the rest of this paper, we restrict our attention to solutions that do not contain singleton cliques}. Note that, on the other hand an instance of \AWECP (\AEWCD) obtained from an instance of \WECP (\EWCD) \textit{can} have singleton cliques to satisfy vertex weights. However, such vertices must necessarily be the representative vertices of reduced blocks.



We now formally define two types of $\star$-twins.

\begin{definition}[\textbf{Identical and fraternal twins}]
Given a YES-instance $(G,k)$ of \AWECP, $\star$-twins $u$ and $v$ in $G$ are called \textit{identical twins} if there exists \textit{no} solution in which $u$ and $v$ have distinct signatures, and \textit{fraternal twins} otherwise.
\end{definition}





Note that there can be solutions in which fraternal twins have identical signatures -- we only require that in \emph{some} solution they have different ones. Moreover, since a block consists of $\star$-twins, if any two vertices in a block are identical twins, then all must be.
We call such blocks \emph{identical blocks}.

\KR 2 from ~\cite{feldmann2020} implies that all blocks with $>2^k$ vertices are identical blocks. Our main insight is that a broader set of blocks must have this property. More specifically, any block with $>k$ vertices must be an identical block.

To prove this, we will use a result from combinatorial design theory known as the non-uniform Fisher's inequality:

\begin{theorem}[\textnormal{restated from }~\cite{mathew2020combinatorial}]\label{thm:fisher}
Let $w$ be a positive integer and let $A = \{A_1,...,A_t\}$ be a family of subsets of $U = \{e_1,...,e_r\}$. If $|A_i \cap A_j| = w$ for each $1 \leq i < j \leq t$, then $|A| =t \leq r = |U|$.
\end{theorem}

Essentially, the non-uniform Fisher's inequality states that if we have a set of $r$ elements and we form $t$ subsets of these elements such that any two subsets intersect in exactly $w$ elements, then the number of subsets can be at most the number of elements.

Fisher's inequality was first proposed in the context of Balanced Incomplete Block Design (BIBD) (See ~\cite{stinson2004introduction} and ~\cite{babai1988linear}). The uniform version was first proposed by Ronald Fisher 
and Majumdar~\cite{majumdar1953some} showed that the inequality holds even in the non-uniform case. The inequality has since been proven and applied in many different problem areas. In fact, De Caen and Gregory ~\cite{de1985partitions} showed the following corollary which, as we will show below, directly corresponds to the problem we're considering.

Let $K_t$ represent the unweighted, complete graph on $t$ vertices, and $wK_t$ denote the complete multigraph on $t$ vertices where the multiplicity of every edge is $w$. Let us say that a partition $R$ of the edge-set of $wK_t$ into cliques is \emph{non-trivial} if there exists any clique in $R$ that is not $K_t$. \cref{cor:decaen} implies if $R$ is non-trivial then $t \leq |R|$. This can also be viewed as a generalization of the clique partition theorem of De Bruijn and Erd\"os ~\cite{de1948combinatorial} for arbitrary $w$ (the result of ~\cite{de1948combinatorial} was for $w=1$).


\begin{corollary}[restated Corollary 1.4 from ~\cite{de1985partitions}] \label{cor:decaen}
Let $R$ be a partition of the edge-set of $wK_t$ into non-empty cliques. If not all cliques in $R$ are $K_t$ then $|R| \geq t$.
\end{corollary}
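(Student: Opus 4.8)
The plan is to recast the partition $R$ as a set system to which the non-uniform Fisher's inequality (\cref{thm:fisher}) applies, with the \emph{cliques} playing the role of ground elements and the \emph{vertices} playing the role of sets. Concretely, I would take the universe to be $U = R$ (so $|U| = |R|$) and, for each of the $t$ vertices $v$, define $A_v \subseteq R$ to be the set of cliques in the partition that contain $v$. Since $R$ partitions the edge multiset of $wK_t$, every pair $\{u,v\}$ is covered by exactly $w$ cliques: the cliques containing both $u$ and $v$ are precisely those accounting for the $w$ parallel copies of the edge $uv$. Hence $|A_u \cap A_v| = w$ for all distinct $u,v$, which is exactly the intersection hypothesis of \cref{thm:fisher}.

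The one gap — and the step I expect to be the crux — is that \cref{thm:fisher} requires the family $\{A_v\}$ to consist of $t$ \emph{distinct} sets, whereas a priori two vertices might have identical clique-membership. This is precisely where the non-triviality hypothesis must enter, and I would use it via the contrapositive. Suppose $A_u = A_v$ for two distinct vertices $u,v$. Then $|A_u| = |A_u \cap A_v| = w$, so $u$ lies in exactly $w$ cliques. For any third vertex $z$, the edge $uz$ must still be covered $w$ times, and every clique covering it contains $u$ and so lies in $A_u$; as $|A_u| = w$, all $w$ cliques of $A_u$ must contain $z$. Since $z$ was arbitrary, every clique in $A_u$ contains all $t$ vertices, i.e.\ equals $K_t$. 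But $w$ copies of $K_t$ already cover every edge with its full multiplicity $w$, leaving no room for any further edge-bearing clique; thus $R$ consists of exactly $w$ copies of $K_t$, contradicting non-triviality. Therefore, when $R$ is non-trivial the sets $A_v$ are pairwise distinct.

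Combining the two steps finishes the argument: if $R$ is non-trivial then $\{A_v : v \text{ a vertex}\}$ is a family of $t$ distinct subsets of the $|R|$-element universe $U=R$, all of whose pairwise intersections equal $w$, so \cref{thm:fisher} yields $t \le |R|$, which is the desired conclusion $|R| \ge t$. A minor bookkeeping point I would handle up front is to discard any single-vertex cliques (they carry no edges and so do not participate in an edge partition), allowing ``non-trivial'' to be read as ``some clique has at least two but fewer than $t$ vertices''; with that convention the contradiction in the middle paragraph goes through cleanly.
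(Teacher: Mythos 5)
Your derivation is correct. Note that the paper does not actually prove \cref{cor:decaen} --- it is imported verbatim from De Caen and Gregory with only the informal remark that it ``directly corresponds'' to the setting of \cref{thm:fisher} --- so there is no in-paper argument to compare against; what you have written is precisely the standard De Bruijn--Erd\H{o}s-style dual construction (cliques as ground elements, vertices as sets) that the paper's juxtaposition of the two statements implicitly relies on. You correctly identify the one genuine subtlety, namely that \cref{thm:fisher} needs the sets $A_v$ to be pairwise distinct, and your resolution is sound: $A_u = A_v$ forces $|A_u| = w$, whence every clique through $u$ contains every other vertex and $R$ degenerates to $w$ copies of $K_t$, contradicting non-triviality. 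Your up-front convention of excluding single-vertex cliques is also the right reading of ``partition of the edge-set into non-empty cliques'' (a one-vertex clique contributes an empty part, and admitting such cliques would in fact falsify the statement as written, e.g.\ $K_5$ plus one singleton for $w=1$); this matches how the paper itself uses the corollary inside the proof of \cref{thm:main}, where singleton projections are disposed of separately before the corollary is invoked.
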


We are now ready to prove our main theorem.
Let $D$ be a block with $t=|D| \geq k+1$ vertices. Let $w$ be the weight of the edges in $D$.

\begin{theorem}\label{thm:main}
If the given instance $(G,k)$ is a YES-instance of \AWECP, then $D$ must be an identical block.
\end{theorem}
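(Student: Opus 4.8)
The plan is to prove the stronger statement that in \emph{every} solution of $(G,k)$ all vertices of $D$ receive the same signature; since $(G,k)$ is a YES-instance, this is exactly the assertion that no solution separates two vertices of $D$, i.e.\ that $D$ is identical. So fix an arbitrary solution (a multiset of at most $k$ weight-$1$ cliques exactly covering the edge weights). By \cref{lem:submatrix} the vertices of $D$ pairwise span edges of the common weight $w$, and because off-diagonal wildcards are disallowed, any two vertices of $D$ have identical weights to every vertex of $V\setminus D$. I will classify each clique $C$ of the solution by the size of its trace $C\cap D$ and argue that $|C\cap D|\in\{0,t\}$ always; this makes each clique all-or-nothing on $D$ and forces equal signatures.

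First I rule out traces of size between $2$ and $t-1$. Every edge inside $D$ has weight $w$ and so is covered exactly $w$ times, and a clique covers such an edge iff its trace contains both endpoints; hence the traces of size at least $2$ form an exact partition $R$ of the edge-multiset of $wK_t$ into cliques. The solution has at most $k$ cliques, so $|R|\le k<t$. By the contrapositive of \cref{cor:decaen}, $|R|<t$ forces every clique of $R$ to be $K_t$; that is, every clique meeting $D$ in at least two vertices in fact contains all of $D$.

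The remaining and more delicate case is a trace of size exactly $1$, which \cref{cor:decaen} cannot see because such a clique covers no edge inside $D$. Suppose some clique $C$ has $C\cap D=\{u\}$. Since we restrict to solutions without singleton cliques, $C$ contains some $x\in V\setminus D$, and by the uniform-neighborhood property $x$ is adjacent to \emph{every} vertex of $D$ with the same weight $w'>0$. Now examine the covers of the $t$ edges $\{(v,x):v\in D\}$. By the previous paragraph any clique containing $x$ meets $D$ either in all of $D$ or in a single vertex; the cliques containing $D\cup\{x\}$ contribute the same amount $p$ to each of these edges, so each edge $(v,x)$ still needs $w'-p$ covers from cliques whose trace is exactly $\{v\}$, and the presence of $C$ shows $w'-p\ge 1$. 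These singleton-trace cliques are distinct for distinct $v$, so the solution contains at least $t\ge k+1$ cliques through $x$, contradicting the budget of $k$. Hence no clique meets $D$ in exactly one vertex.

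Combining the two cases, every clique of the (arbitrary) solution contains all of $D$ or none of it, so $B_u$ and $B_v$ agree in every coordinate for all $u,v\in D$; all vertices of $D$ share a signature. As the solution was arbitrary, $D$ is an identical block. I expect the De Caen--Gregory step to be essentially bookkeeping once the partition $R$ is identified; the real obstacle is the size-$1$ trace, where one must leave the block and use the exact covering of the star edges to $x$ together with the uniform neighborhoods, since this is precisely the configuration that produces distinct signatures without increasing the number of cliques needed \emph{inside} the block.
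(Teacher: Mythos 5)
Your proof is correct and follows essentially the same route as the paper: project the solution cliques onto $D$, apply the De Caen--Gregory corollary (\cref{cor:decaen}) to force every trace of size at least two to be all of $D$ once the budget is below $t$, and dispose of size-one traces by a counting argument through a neighbor outside the block. The only cosmetic difference is in that last step, where you count $t$ distinct singleton-trace cliques through a fixed $x\in C\setminus D$, while the paper counts, for each $u\in D$, a clique containing $u$ and $v'$ but not $v$; both yield the same contradiction with the budget of $k$ cliques.
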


\begin{proof}

Suppose $D$ is not identical. Since $G$ is a YES-instance, there exists a solution such that not all vertices from $D$ have the same signature i.e. not all vertices from $D$ appear in the same cliques. Let $S$ be the multiset of cliques in such a solution. Let $S_D \subseteq S$ be the multiset of cliques in which vertices from $D$ appear. From every clique $C \in S_D$, delete all vertices that are not in $D$ and call the resultant multiset $R$. Thus, $R$ is a multiset of subsets of $D$. Since any subset of vertices in a clique also form a (smaller) complete graph, $R$ is a multiset of cliques. One can think of the cliques in $R$ as the ``projection'' of the cliques in $S_D$ onto $D$. Since, not all vertices from $D$ appear in the same cliques, $R$ must consist of a non-trivial clique.

We first show that the non-trivial cliques in $R$ cannot all be singleton cliques. 
Suppose there is a singleton clique in $R$ consisting of $v \in D$. Then there exists a non-singleton clique $C$ in $S_D$ such that $C \cap D = v$. Moreover, since $C$ is not singleton, there exists a vertex $v' \in C \setminus D$. Let $u \neq v$ be a vertex in $D$. Since $u$ and $v$ are $\star$-twins and $v$ is a neighbor of $v'$, $u$ must also be a neighbor of $v'$ and $w_{uv'}=w_{vv'}$. Thus, there must exist a clique $C' \in S_D$, $u,v' \in C', v \notin C'$. In other words, every vertex $u \in D, u \neq v$ must be a part of some clique that $v$ is not a part of. If each such clique projects into a singleton clique in $R$, then $|R| > |D|=k+1$ which is a contradiction since $G$ is a YES-instance. Thus, there must exist a non-singleton, non-trivial clique in $R$ i.e. there must exist a clique containing $>1$ but not all vertices from $D$. 

  Let $H$ be an unweighted, complete multigraph on $|D|$ vertices in which the multiplicity of every edge is $w$. It is easy to see that $R$ is a partition of the edge-set of the multigraph $H$ into cliques. If $R$ consists of a non-singleton, non-trivial clique then by ~\cref{cor:decaen}, $|D| \leq |R|$, which is a contradiction because $G$ is a YES-instance (implying $|R| \leq k$ and $|D| \geq k+1$). Thus, $R$ cannot consist of non-trivial cliques.

In other words, $R$ must be trivial i.e. every clique in $R$ must be a $K_t$. Thus, every vertex of $D$ must be in the exact same set of cliques in $S$ and have the same signature in the solution matrix $B$. Thus, $D$ must be an identical block.


\end{proof}

Hence, we can reduce it to a representative vertex, leading to our enhancement of \KR 2:


\begin{customkrr}{2+}\label{rr:enhanced}
    If there is a block $D$ of size greater than $k$, then apply the reduction of \KR 2.
\end{customkrr}





We further show that \KR 2+ gives a valid kernel even in the case of \AEWCD. Our proof of correctness closely follows that of rule 2 in ~\cite{cooley2021} and proceeds in two parts. Proofs of these Lemmas are deferred to Appendix~\ref{app:proofs}.

\begin{lemma} \label{lem:correctness-rr2-yes}
    Let $(A',k)$ be the reduced instance constructed by applying \KR 2+ to $(A,k)$. Then if $(B',W')$ is a solution for $(A',k)$  then the $(B,W)$ constructed by \cref{rr:enhanced} is indeed a solution to $(A,k)$.
\end{lemma}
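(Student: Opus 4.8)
The plan is to exhibit the promised $(B,W)$ explicitly and then verify the matrix identity $A \es BWB^T$ entry by entry. Writing $i$ for the representative vertex of the block $D$ retained by the reduction, the natural construction is to keep the clique weights unchanged, $W = W'$, and to extend the reduced membership matrix $B'$ to $B$ by copying the representative's signature onto every deleted vertex: for each $v \in D$ set $B_v = B'_i$, and for every $u \notin D$ set $B_u = B'_u$. Since $W = W'$ uses the same $k$ cliques, the cardinality bound is automatically preserved, so it only remains to check feasibility, i.e.\ that $A_{uv} \es (BWB^T)_{uv}$ for every pair $u,v$ (including $u = v$).

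First I would split into cases according to how $u,v$ meet $D$. When both $u,v \notin D$, the relevant rows of $B$ and $A$ are untouched by the reduction, so the constraint is inherited directly from $(B',W')$. When $u \notin D$ and $v \in D$, I would use that $v$ and $i$ are $\star$-twins (\cref{lem:submatrix}): since off-diagonal entries are never wildcards, $A_{uv} \es A_{ui}$ forces $A_{uv} = A_{ui}$, while $(BWB^T)_{uv} = B_u W B_v^T = B'_u W' (B'_i)^T = (B'W'(B')^T)_{ui}$, which $\es$-matches $A'_{ui} = A_{ui}$ because $(B',W')$ solves $(A',k)$. For $u,v \in D$ with $u \ne v$, both rows equal $B'_i$, so $(BWB^T)_{uv} = (B'W'(B')^T)_{ii}$; here the reduction has annotated $i$ with the within-block edge weight $w$, so $A'_{ii} = w$ is a genuine (non-wildcard) value and feasibility of $(B',W')$ gives $(B'W'(B')^T)_{ii} = w = A_{uv}$. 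Finally, diagonal entries $A_{vv}$ for $v \in D$ are either wildcards (trivially satisfied) or, if $v$ was annotated, equal to $w$ by the $\star$-twin property, and are handled exactly as the previous case; diagonal entries outside $D$ are unchanged.

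The hard part, and the step that dictates the construction, is the within-block case: copying a single signature onto an entire block forces every intra-block edge to receive weight $(B'W'(B')^T)_{ii}$, so the argument only closes because the reduction deliberately sets the representative's vertex weight to $w$ and that annotation is non-wildcard, pinning $(B'W'(B')^T)_{ii}$ to exactly $w$. The one subtlety to treat carefully is that $\es$ is not transitive; I would therefore lean on the fact that wildcards occur only on the diagonal, so that in the off-diagonal chains above each $\es$ collapses to ordinary equality and the reasoning is sound. Implicitly this also certifies that the enlarged cliques remain cliques: any vertex sharing a positive-weight clique with $i$ is, by the twin property, adjacent to all of $D$, so no nonedge (an off-diagonal $0$ entry of $A$) is assigned positive weight.
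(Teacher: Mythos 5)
Your construction and verification coincide with the paper's own proof: the paper likewise sets $W=W'$, copies the representative's signature $B'_i$ onto every vertex of the reduced block, and checks $A_{uv}\es (BWB^T)_{uv}$ in the same three cases, using the annotation $A'_{ii}=w$ together with the fact that non-wildcard entries within a block are all equal (Lemma~7 of Feldmann et al.) to close the intra-block case. Your additional remarks on the non-transitivity of $\es$ and on off-diagonal entries never being wildcards are just extra care on the same argument, so the proposal is correct and essentially identical to the paper's proof.
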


\begin{lemma}
    \label{lem:correctness-rr2-no}
    If $(A,k)$ is a YES-instance, then
    the reduced instance $(A',k)$ produced by \cref{rr:enhanced} is a YES-instance.
\end{lemma}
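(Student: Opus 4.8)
The plan is to take any solution of $(A,k)$ and simply restrict it to the surviving vertices; the only thing that needs checking is the single new annotation that \cref{rr:enhanced} places on the representative vertex, and that check is exactly where the identical‑block property is used. First I would observe that, because $(A,k)$ is a YES‑instance and $|D| = t \ge k+1$, the \AEWCD analogue of \cref{thm:main} applies and $D$ is an identical block; by the definition of identical twins this means that in \emph{every} solution of $(A,k)$ all vertices of $D$ share one common signature. So let $(B,W)$ be an arbitrary solution, let $i$ be the representative kept by \cref{rr:enhanced}, and write $s := B_i$ for the common signature of $D$. I would then set $W' := W$ and let $B'$ be $B$ with the rows indexed by $D\setminus\{i\}$ deleted, so that $B' \in \bin^{n'\times k}$ and $W'$ is still a non‑negative $k\times k$ diagonal matrix.

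For the verification, note first that for any two surviving vertices $u,v$ we have $B'_u = B_u$, $B'_v = B_v$ and $A'_{uv}=A_{uv}$, so $(B'W'(B')^T)_{uv} = (BWB^T)_{uv} \es A_{uv} = A'_{uv}$; every off‑diagonal entry and every surviving annotation of $A'$ is therefore inherited from the solution $(B,W)$. The only entry altered by the reduction is the annotation $A'_{ii} = w$ on the representative. Since $t \ge k+1 \ge 2$ I can pick some $u \in D\setminus\{i\}$; as $u,i$ are $\star$‑twins the edge $(i,u)$ has weight $w$ (the common edge‑weight of $D$), and as $B_i = B_u = s$ we obtain
\[
B'_i\,W'\,(B'_i)^T \;=\; s\,W\,s^T \;=\; B_i\,W\,B_u^T \;=\; (BWB^T)_{iu} \;\es\; A_{iu} \;=\; w \;=\; A'_{ii}.
\]
Hence $(B',W')$ satisfies $A' \es B'W'(B')^T$ using at most $k$ cliques, so $(A',k)$ is a YES‑instance.

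The main obstacle is not this restriction argument, which is routine once the block structure is fixed, but justifying its single hypothesis in the real‑weighted setting: that a block of size $>k$ in an \AEWCD YES‑instance is identical. \cref{thm:main} and \cref{cor:decaen} are stated for integer edge multiplicities (the multigraph $wK_t$), i.e.\ the \AWECP regime, whereas here the clique weights $\cw_\ell$ and edge weights are arbitrary positive reals, so the combinatorial partition statement no longer applies verbatim. To close this gap I would replace \cref{cor:decaen} by the weighted, linear‑algebraic form of Fisher's inequality: form the Gram matrix $M$ of the vectors $x_v := (\sqrt{\cw_\ell}\,[v\in C_\ell])_\ell$ for $v \in D$, so that $M_{uv}=w$ for $u\ne v$ and $M_{vv}=\sum_{\ell:\,v\in C_\ell}\cw_\ell \ge w$, and show that whenever the projection of the solution onto $D$ contains a non‑trivial clique the matrix $M = \mathrm{diag}(M_{vv}-w) + wJ$ is positive definite; but $M = XX^T$ with $X$ carrying one column per clique, so $t = \mathrm{rank}(M) \le k$, contradicting $t \ge k+1$. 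The genuinely delicate point is the degenerate case where some $M_{vv}=w$ (a vertex of $D$ that lies only in cliques spanning all of $D$), which breaks strict positive‑definiteness and must be separated out before the rank bound can be applied.
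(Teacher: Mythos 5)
Your construction and verification are essentially the paper's own proof: take a solution $(B,W)$ of $(A,k)$, invoke \cref{thm:main} to conclude the reduced block $D$ is identical, keep $W$ and restrict $B$ to the surviving rows with the representative inheriting the common signature of $D$, and then check the three cases (both endpoints outside $D$, one endpoint the representative, and the new diagonal annotation $A'_{ii}=w$), the last of which is exactly where the identical-block property is used. Where you genuinely depart from the paper is your third paragraph: the paper simply cites \cref{thm:main} inside this proof even though that theorem (and \cref{cor:decaen}, which is about integer-multiplicity multigraphs $wK_t$) is stated and proved in the \AWECP regime, deferring the lift to real clique weights to the general framework of Cooley et al. You instead make that step explicit via the linear-algebraic (Gram-matrix) form of non-uniform Fisher's inequality, writing $M=XX^T$ with one column per weighted clique so that $\mathrm{rank}(M)\le k$ while a non-identical block of size $t\ge k+1$ would force $\mathrm{rank}(M)=t$. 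This is the standard and correct generalization, and you rightly flag the one delicate degenerate case (a vertex $v\in D$ with $M_{vv}=w$, i.e.\ lying only in cliques containing all of $D$), which the positive-definiteness argument must treat separately, just as in the classical proof of Fisher's inequality. In short: same proof skeleton as the paper, but with a more careful (and arguably needed) justification of the hypothesis that large blocks are identical in the real-weighted setting.
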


Thus, our kernelization algorithm applies \KR 1 of ~\cite{cooley2021} and \KR 2+ to get a kernel that is at most $k2^k$ in size. The kernelization algorithm sorts the rows in $A$ to form blocks and reduces each block that has $\geq k+1$ vertices. The sorting of rows of $A$ and division into blocks takes time $O(n^3)$ and the application of \KR 1 and \KR 2+ take time $O(n)$. Thus, the time complexity of the kernelization algorithm is $O(n^3)$, matching that of ~\cite{cooley2021}.

\notes{
\begin{enumerate}
\item Define identical twins
\item Define fraternal twins
\item Prove that every vertex in a block of fraternal twins has to have unique signature.
\item Prove our main claim using Fisher's inequality
\item State the final theorem about kernel size.
\item 2 columns. Page 6 ends here.
\end{enumerate}
}

\section{Faster Decomposition Algorithm} \label{sec:decomposition}

\notes{
\begin{enumerate}
\item We should probably put the pseudocode for this in the Appendix. Do you agree?
\item Briefly describe PT1.
\item Prove correctness of PT1
\item Prove that every block of identical twins has a unique signature.
\item Describe PT2. Mention that it follows from the definitions of identical and fraternal twins.
\item Question:Time complexity bound? In terms of what?
\item Size: 1.75 columns. Page 6.75 ends here.
\end{enumerate}
}


Once a kernelized instance is obtained, one can run the decomposition algorithm of ~\cite{cooley2021}, shown in \AlgLp (\cref{alg:Lpmain}), on it. The algorithm assigns signatures to vertices one-by-one, iteratively building a solution.  When only a subset of all vertices have been assigned a signature, we call this a \textit{partial assignment}. When trying to find a compatible signature for a vertex, the algorithm searches the entire space of $2^k$ possible signatures for that vertex. For every signature that the algorithm considers for a vertex, the clique weights as given by the weight matrix $W$ may need to change. \Comp (\cref{alg:comp}) checks if the weight matrix $W$ is compatible with the current partial assignment. If not, to find a compatible new set of weights, the algorithm builds a Linear Program (LP) which encodes the partial assignment and the edge-weights (\InferCWLp, \cref{alg:InferCWLp}). If the LP returns a solution, the algorithm updates the weight matrix. If the LP fails to return a solution, it means that no feasible weight matrix exists for this partial assignment. In this case, the algorithm backtracks. Since \InferCWLp and \Comp are not affected by our search reduction rules, we defer their pseudocode to the appendix.

\begin{algorithm}[h]
\begin{algorithmic}[1]
    \For{$P \in \{0, 1 \}^{2k \times k}$}\label{line:LpFor}
\State{initialize $\basis$ to a $n \times k$ null matrix}
\State{$b, i \leftarrow 1$}
\While{$b \leq 2k $ }\label{line:LpWhile}
\State{$\basis_i \leftarrow P_b$}\label{line:LpBasisFill}
        \State{$b\leftarrow b+1$}
        \State{$W\leftarrow$ \InferCWLp($A$, $\widetilde{B}$)}
        \If{$W$ is not null matrix}
            \State{$(B, i) \leftarrow$ \FillNB(A, $\widetilde{B}$, $W$)}\label{line:LpFillNBCall}
            \If{$i=n+1$}
                \Return {($B$, $W$) }\label{line:LpYesReturn}
            \EndIf
        \Else
            $\;b \leftarrow 2k+1$ \Comment{\small{\it null $W$; break out of while}}
        \EndIf
    \EndWhile
\EndFor
\State \Return No \label{line:LpNoReturn}
\end{algorithmic}
\caption[]{\hspace*{-4.3pt}{.} \AlgLp}\label{algLppsuedo}
\label{alg:Lpmain}
\end{algorithm}

The main insight of ~\cite{cooley2021} was that the pseudo-rank of $B$ is at most $2k$ and that once the basis vectors of $B$ have been guessed correctly, there will be no need to backtrack when filling in the signatures of other vertices. They showed that we need to run the LP only when the algorithm backtracks and that this happens for at most $2^{2k^2}$ partial assignments. \FillNB (\cref{alg:fillNB}) shows the pseudocode for filling in the signatures for the non-basis vectors. Note that every time the algorithm picks a new set of basis vectors ($P$), the existing assignment of signatures (even non-basis vectors) are discarded. After an exhaustive search if no valid assignment is found, the algorithm returns that the instance is a NO-instance.

\begin{algorithm}[!h]
\begin{algorithmic}[1]
\State $B\leftarrow\basis$
\While{$B$ has a null row}\label{line:fillnbwhile}
    \State \textbf{let} $B_i$ be the first null row
    \For{$v\in \left\{ 0,1 \right\}^k$}\label{line:fillnbfor}
    \If{\Comp($A$, $B$, $W$, $i$, $v$)}\label{line:compcall}
    \State $B_i\leftarrow v$ \label{line:BFill}
        \State \textbf{goto} line 2
    \EndIf
    \EndFor
    \State \Return $(B,i)$ \Comment{\small{\it there is no $(i,W)$-compatible $v$}}
\EndWhile
\State \Return $(B, n+1)$ \Comment{\small{\it B has no null row}}
\end{algorithmic}
\caption[]{\hspace*{-4.3pt}{.} \FillNB($A$, $\widetilde{B}$, $W$)}\label{FillNonBasis}
\label{alg:fillNB}
\end{algorithm}


We now design several search reduction rules that can help to quickly prune away branches that cannot lead to a solution.




\textbf{\SRfull 0}: \textit{Assign signatures to reduced blocks before non-reduced blocks.} We observed during experiments that the order in which we assign signatures to vertices can significantly impact how fast the algorithm terminates. Ideally, we would like the first $k$ vertices to be as close to an independent set as possible (since non-neighbors significantly restrict potential valid signatures, see \SR 1). Since vertices within a block must be neighbors, we wanted an order that hit many distinct blocks quickly, yet was compatible with the engineering required for \SR 2 (below). This led to the strategy \texttt{push\_front}, which assigns signatures to all vertices which are representatives of reduced blocks (which necessarily have size 1) before proceeding to those in non-reduced blocks. We validated our choice by empirically evaluating this against several other orders including the \texttt{arbitrary} approach in ~\cite{cooley2021}; details are in ~\cref{subsec:exp_order}.



\textbf{\SRfull 1:} \textit{For every vertex, generate only those signatures that don't share a clique with the non-neighbors of that vertex.} The main idea behind \SR 1 is that any two non-neighbors should not share a clique. Thus, for non-neighbors $u$ and $v$, $B_u{B_v}^T=0$. Whenever the algorithm is searching for a signature to assign to a new vertex, it generates the list of cliques that are ``forbidden'' for that vertex based on the signatures of the vertex's non-neighbors. It then uses this list to generate only those signatures that respect the ``forbidden'' cliques. In many cases, this drastically reduces the number of signatures to be tried.

\textbf{\SRfull 2}: \textit{Vertices across blocks must have unique signatures. The cliques that a vertex participates in cannot be a proper subset of the cliques its $\star$-twin participates in. Make all signatures in a block either identical or pairwise distinct.}

We first prove that signatures cannot be shared across blocks. 

\begin{lemma}
If $u$ and $v$ belong to different blocks then $B_u \neq B_v$.
\end{lemma}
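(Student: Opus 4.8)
The plan is to prove the contrapositive: assuming that $u$ and $v$ carry the same signature in the solution, i.e. $B_u = B_v$, I will show that $u$ and $v$ are $\star$-twins and hence lie in the same block. By the definition of $\star$-twins this requires establishing two things separately: that $u$ and $v$ are adjacent, and that $A_u \es A_v$. Throughout I use the solution identity $A \es BWB^T$, and the fact that $BWB^T$ is an honest real matrix (so its entries never equal $\star$).

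First I would establish adjacency. Since isolated vertices are removed in preprocessing, $u$ has at least one incident edge of positive weight, so in the relation $A_{ux}\es(BWB^T)_{ux}$ some off-diagonal value $A_{ux}>0$ forces a clique $j$ with $B_{uj}=1$ and $W_{jj}>0$. Because $B_u=B_v$ we also have $B_{vj}=1$, whence, by nonnegativity of all summands, $(BWB^T)_{uv}=\sum_{j'}B_{uj'}B_{vj'}W_{j'j'}\geq W_{jj}>0$. As $A_{uv}$ is a genuine real off-diagonal entry with $A_{uv}\es(BWB^T)_{uv}$, this gives $A_{uv}>0$, so $u$ and $v$ are adjacent.

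The crux is showing $A_u \es A_v$. Writing $c_x := B_u W B_x^T$, the hypothesis $B_u = B_v$ immediately gives $(BWB^T)_{ux}=(BWB^T)_{vx}=c_x$ for every column $x$. The tempting shortcut — chaining $A_u \es (BWB^T)_u = (BWB^T)_v \es A_v$ — fails because $\es$ is \emph{not} transitive, and this non-transitivity is precisely the main obstacle. I would resolve it by exploiting that wildcards in $A$ occur \emph{only} on the diagonal. For any column $x \notin \{u,v\}$, both $A_{ux}$ and $A_{vx}$ are genuine reals, so each is forced by $A_{ux}\es c_x$ (resp. $A_{vx}\es c_x$) to equal the real number $c_x$, giving $A_{ux}=A_{vx}$. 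The only columns where a wildcard can intervene are $x=u$ and $x=v$: in column $u$ the entry $A_{vu}$ is off-diagonal (hence real and equal to $c_u$) while $A_{uu}$ is diagonal, so $A_{uu}\es A_{vu}$ holds automatically when $A_{uu}=\star$ and by $A_{uu}=c_u=A_{vu}$ otherwise; column $v$ is symmetric. Thus $A_{ux}\es A_{vx}$ in every column, i.e. $A_u \es A_v$. Combined with adjacency, $u$ and $v$ are $\star$-twins and therefore in the same block, which is the contrapositive of the claim.
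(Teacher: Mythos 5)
Your proof is correct and follows essentially the same route as the paper's: the paper argues by contradiction, splitting into the non-adjacent case (handled exactly as your adjacency step, via the absence of isolated vertices and positivity of clique weights) and the case where $u,v$ are adjacent but $A_u \stackrel{\star}{\neq} A_v$ (handled by the very chain $A_u \es B_uWB^T = B_vWB^T \es A_v$ that you dismiss as a failed shortcut). Note that this shortcut does \emph{not} actually fail: $\es$ is transitive whenever the middle term is wildcard-free, and $BWB^T$ is an honest real matrix --- which is precisely what your more laborious column-by-column argument ends up verifying.
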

\begin{proof}
For contradiction, assume $B_u=B_v$. Since $u$ and $v$ belong to different blocks, they must not be $\star$-twins. Thus, either $u$ is not adjacent to $v$ or $A_u \stackrel{\star}{\neq} A_v$. 

If $u$ is not adjacent to $v$, $A_{uv}=B_uWB_v^T=0$. Since, $B_u=B_v$ and since we do not allow cliques to have weight $0$, $B_uWB_v^T=0$ iff $B_u=B_v=0$. Thus, it must be the case that $B_u=B_v=0$. However, since the graph has already been preprocessed to remove all isolated vertices, the remaining vertices must all have at least one edge adjacent to them and hence must be a part of at least one clique. Thus, $B_u, B_v \neq 0$ which is a contradiction.


Now consider the case when $u$ and $v$ are adjacent but $A_u \stackrel{\star}{\neq} A_v$. Since $B_u=B_v$, $B_uWB^T = B_vWB^T$. Since $B_uWB^T \es A_u$ and $B_vWB^T \es A_v$,  this implies that $A_u \es A_v$, a contradiction. 
\end{proof}


We will now show that there cannot be $\star$-twins such that the signature of one is a subset of the signature of the other. Note that when we apply the \SRs, the graph has already been preprocessed. Thus, there are no isolated vertices in $G$.


\begin{lemma}
Consider a YES-instance $(G,k)$ of \AEWCD. There exists a solution $B$ such that for every pair of  vertices $u$ and $v$ that are $\star$-twins in $G$, $B_u \nsubset B_v$ and vice versa.
\end{lemma}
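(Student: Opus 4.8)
The plan is to produce the desired solution by starting from \emph{any} solution $(B,W)$ of the YES-instance and then deleting every singleton clique $\{v\}$ whose vertex $v$ is unannotated. Such a clique contributes only to the diagonal entry $A_{vv}$, which for an unannotated vertex is a wildcard and hence carries no constraint; so removing it leaves every edge (off-diagonal) constraint satisfied and only decreases the clique count. I claim the resulting solution already has the stated property, and I would argue this by contradiction: suppose two $\star$-twins $u,v$ satisfy $B_u \subsetneq B_v$ (the ``vice versa'' direction follows by swapping the roles of $u$ and $v$).

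The structural fact I would lean on is that wildcards occur only on the diagonal of $A$. Consequently, for every third vertex $z \notin \{u,v\}$ the entries $A_{uz}$ and $A_{vz}$ are genuine reals, and since $u,v$ are $\star$-twins we have $A_u \es A_v$, which off the diagonal forces $A_{uz}=A_{vz}$, i.e. $B_u W B_z^T = B_v W B_z^T$ for all $z \neq u,v$. Writing $d = B_v - B_u$ — the $0/1$ indicator of the (nonempty) set $D$ of cliques containing $v$ but not $u$ — this reads $d\,W B_z^T = 0$ for every $z \neq u,v$. Because each clique weight $W_{ii}$ is positive and each $B_z$ is binary, every such identity forces $(B_z)_i = 0$ for all $i \in D$. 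Since also $u \notin C_i$ for $i \in D$ by definition of $D$, each clique indexed by $D$ contains $v$ and no other vertex: each is the singleton $\{v\}$.

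I would then close with a two-case analysis on whether $v$ is annotated. If $v$ is unannotated, the singletons $\{v\}$ just exhibited are precisely the cliques removed in the first step, so $D=\varnothing$, contradicting properness of the inclusion. If $v$ is annotated, I instead invoke its vertex-weight equation: $B_v W B_v^T = w_v$, while the edge constraint together with $A_{uv}\es A_{vv}$ gives $B_u W B_v^T = w_{uv} = w_v$; subtracting the two yields $d\,W B_v^T = 0$. But $d\,W B_v^T = \sum_{i\in D} W_{ii} > 0$, a contradiction. In either case no proper-subset relation between the signatures of $\star$-twins can survive.

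The main obstacle is the annotated case. Because \AEWCD (unlike \WECP) genuinely allows singleton cliques in order to realize vertex weights, the convenient ``delete all singletons'' reduction is unavailable for annotated vertices, so that case cannot be dispatched by minimality and must instead be ruled out through the vertex-weight equation. The only other point needing care is checking that deleting an unannotated singleton truly preserves a valid solution, which follows cleanly once one notes such a clique touches only a wildcard diagonal entry.
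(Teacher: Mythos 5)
Your proof is correct, and it reaches the contradiction by the same underlying mechanism as the paper --- the equalities $A_{uz}=A_{vz}$ forced by $\star$-twinhood at a third vertex $z$, combined with positivity of clique weights --- but it is organized differently and is actually more complete on one point. The paper picks a single witness clique $C$ containing $v$ but not $u$: if $C$ is non-singleton it takes $v'\in C\setminus\{v\}$ and derives $w_{uv'}<w_{vv'}$ from the proper containment of clique sets, contradicting $w_{uv'}=w_{vv'}$; if $C$ is a singleton (or has weight $0$) it simply asserts that $C$ can be deleted. You instead show that \emph{every} clique in the difference set $D$ must be the singleton $\{v\}$, and then kill the singletons by a case split on annotation. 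The payoff of your route is precisely the annotated case: deleting a positive-weight singleton $\{v\}$ is \emph{not} harmless when $v$ carries a vertex-weight constraint, so the paper's ``remove $C$ and the remaining cliques are still a solution'' step is, as written, a gap for \probsty{AEWCD} (the paper implicitly leans on the observation that annotated vertices arise only as representatives of reduced blocks and hence have no $\star$-twins, but never says so in the proof). Your use of the vertex-weight equation $B_vWB_v^T=w_v$ together with $A_{uv}\es A_{vv}$ to get $\sum_{i\in D}W_{ii}=0$ closes that case cleanly and makes the lemma hold for arbitrary \probsty{AEWCD} instances, not just kernelized ones. The one presentational caveat is that your statement ``there exists a solution $B$ such that\dots'' is witnessed by the solution obtained after stripping unannotated singletons, so you should say explicitly that this preprocessing is what produces the claimed $B$; with that said, the argument is sound.
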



\begin{proof}
Suppose there exist $\star$-twins $u$ and $v$ and a solution $B$ such that $B_u \subset B_v$. Then there exists a clique $C$ in the solution such that $v \in C, u \notin C$. If $C$ is a singleton clique or $C$ has weight $0$, then we can remove $C$ from the solution. Clearly, the remaining set of cliques would still be a solution and the theorem would be true. So assume $C$ has positive weight and is not a singleton clique. Thus, there exists a vertex $v' \in C, v' \neq v$. Since $u$ and $v$ are $\star$-twins and $v$ is a neighbor of $v'$, $u$ must also be a neighbor of $v'$ and $w_{uv'}=w_{vv'}$. Let $S_{uv'}$ and $S_{vv'}$ be the set of cliques in the solution in which edges $(u,v')$ and $(v,v')$ participate, respectively. Then since $B_u \subset B_v$, $S_{uv'} \subset S_{vv'}$. Moreover, $w_{uv'}=\sum_{i:{uv'} \in C_i} \cw_i$ where $\cw_i$ represents the weight of clique $i$. Similarly, $w_{vv'}=\sum_{i:{vv'} \in C_i} \cw_i$. But since the cliques have positive weights and $S_{uv'} \subset S_{vv'}, w_{uv'} < w_{vv'}$ which is a contradiction.
\end{proof}


We know from \cref{thm:main} that blocks having $>k$ vertices must be identical blocks. However, blocks having $\leq k$ vertices can be fraternal. Moreover, by Lemma 11 of ~\cite{feldmann2020} we know that if the given instance is a YES-instance then there exists a solution in which all vertices in a block have either identical signatures or pairwise distinct signatures. We show that if the given instance is a YES-instance of \AEWCD then there exists a solution in which this condition is \textit{simultaneously} true for all blocks.

\begin{theorem}\label{thm:simultaneous}
Given a YES-instance $(G,k)$ of \AEWCD, there exists a solution in which every block of $G$ has either identical signatures or pairwise distinct signatures.
\end{theorem}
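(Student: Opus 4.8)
The plan is to reduce the global (all-blocks-at-once) statement to a per-block one and then glue the per-block guarantees together. First, observe that two kinds of blocks are automatically in canonical form in \emph{every} solution and so need no attention: by \cref{thm:main} every block with more than $k$ vertices is identical (all its vertices carry the same signature in any solution), and by definition an identical block has all signatures equal in any solution. Taking the contrapositive of \cref{thm:main}, every \emph{fraternal} block has size at most $k$. Hence the only blocks that can violate the property are fraternal ones, and it suffices to produce a single solution in which every fraternal block has pairwise distinct signatures -- the identical blocks then take care of themselves.

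The engine is Lemma 11 of \cite{feldmann2020}, which already supplies, for any one chosen block, a solution in which that block is either all-identical or (in the fraternal case) all-pairwise-distinct. I would process the fraternal blocks one at a time, maintaining the invariant that every block handled so far is canonical, and at each step apply the Lemma 11 transformation to the current block to make it pairwise distinct. Two facts keep the induction honest. First, the across-block lemma proved above (distinct blocks carry distinct signatures in \emph{every} solution) guarantees that the distinct signatures installed inside a block can never coincide with the signatures of any other block; thus ``pairwise distinct within the block'' already means ``globally unique on that block's vertices,'' so no later step can silently merge them unless it rewrites those very vertices. Second, an identical block stays identical no matter what is done elsewhere, so it cannot be knocked out of canonical form.

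The crux -- and the step I expect to be the main obstacle -- is establishing \emph{non-interference}: that canonicalizing the current fraternal block can be done by touching only the signatures of that block's vertices, leaving every previously canonicalized block intact. The danger is that a clique realizing part of the block's internal edge weight may also contain vertices of other blocks, so re-partitioning it to separate the current block could change edges, and hence forced signatures, elsewhere. The way I would control this is to perform the separation using cliques \emph{contained in} the block: such cliques touch no edge leaving the block, so adding, deleting, or reweighting them cannot disturb any edge incident to another block, and therefore cannot alter another block's signatures. The delicate point is whether these internal cliques actually suffice to realize the all-distinct configuration without overshooting the (fixed) internal edge weight $w$, and this is exactly where the $\star$-twin structure is needed: all vertices of a block present identical weights to every outside vertex, which is what leaves enough internal slack to redistribute, while the no-proper-subset lemma rules out the degenerate situations in which one twin's clique set is contained in another's. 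With non-interference in hand, the induction terminates with every fraternal block pairwise distinct and every identical block all-identical, yielding the desired simultaneous solution. As an alternative packaging of the same idea, one can instead take a solution maximizing the total number of distinct signatures (which, since across-block signatures are always distinct, equals the sum over blocks of the number of distinct signatures inside each block) and argue that any mixed fraternal block would admit a distinctness-increasing local exchange certified by Lemma 11, contradicting maximality; this reformulation isolates the same obstacle as the existence of the improving move.
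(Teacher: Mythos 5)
Your reduction aims at a strictly harder target than the theorem requires, and the step you yourself flag as the crux is exactly the one that does not go through. The theorem only asks that each block be \emph{either} all-identical \emph{or} pairwise distinct; it does not require fraternal blocks to end up pairwise distinct. By insisting on ``every fraternal block pairwise distinct simultaneously,'' you are forced into the non-interference problem you describe: the Lemma~11 transformation that separates a block's vertices into distinct signatures genuinely restructures cliques, and the cliques realizing the internal weight $w$ of a block $D$ can and typically do contain vertices outside $D$. Your proposed fix --- perform the separation using cliques contained in the block --- is not available in general: nothing guarantees the internal edge weights can be realized by block-internal cliques without exceeding the budget of $k$ cliques or disturbing edges leaving the block, and you concede this yourself. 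The ``maximize the number of distinct signatures'' repackaging inherits the same gap, since the improving move it needs is precisely this unestablished separation step.

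The paper's proof goes in the opposite direction and thereby makes non-interference trivial. Given any solution $(B,W)$ and a mixed block $D$ with $B_{v_1}=B_{v_2}\neq B_{v_3}$, it simply overwrites $B_{v_3}$ with $B_{v_1}$, changing one row of $B$ and nothing else. Validity is immediate from the $\star$-twin relation: for every $v$ one has $A_{v_1 v}\es A_{v_3 v}$, so $B_{v_1}WB_v^T\es A_{v_1 v}$ already certifies the constraint at $(v_3,v)$, and the diagonal entry is handled because all non-$\star$ entries of $A_{D,D}$ coincide. Since only rows inside $D$ are ever rewritten, and only ever to a signature already present in $D$, previously processed blocks are untouched, and iterating collapses every mixed block to an all-identical one. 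To salvage your outline, replace ``make every mixed fraternal block pairwise distinct'' with ``make every mixed block identical''; this local exchange does all the work, and neither \cref{thm:main} nor Lemma~11 is actually needed for this theorem.
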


\begin{proof}
Consider any solution $(B,W)$ to the given instance and suppose there exists a block $D$ whose vertices have signatures that are neither all unique nor all pairwise-disjoint (if such a block does not exist then the theorem is trivially true). Thus, there exist vertices $v_1, v_2, v_3$ in $D$ such that $B_{v_1} = B_{v_2} \neq B_{v_3}$. Consider the $n \times k$ matrix $B'$ got by setting ${B'}_{v_1}=B_{v_1}$, ${B'}_{v_2}=B_{v_2}$, ${B'}_{v_3}=B_{v_1}$ and $\forall x \neq v_1, v_2, v_3, {B'}_x=B_x$ i.e. $B'$ has the same signatures as $B$ for all vertices except $v_3$ and ${B'}_{v_3}$ is set to $B_{v_1}$. To prove that $(B',W)$ is indeed a valid solution for $(G,k)$, it is sufficient to show that ${B'_u}W{B'}_v^T\es A_{uv}$ for all vertices $u,v$ in $G$.  There are 3 cases to consider:
    \begin{itemize}
    \item Case 1: $u,v\neq v_3$. Then ${B'_u}W{B'}_v^T={B_u}W{B_{v}^T}\es A_{uv}$.
    \item Case 2: $u=v_3$,$v\neq v_3$. Then ${B'_u}WB_v^T={B_{v_1}}WB_v^T\es A_{v_1v}=A_{v_3v}$,
    where the last equality follows from $v_1$ and $v_3$ being in the same block $D$.
    \item Case 3: $u=v=v_3$. In this case, ${{B'}_{v_3}}W{B'}_{v_3}^T={B_{v_1}}WB_{v_3}^T \es A_{v_1v_3}\es A_{v_3v_3}$.
    The $\es$ here follows because any two entries (that are not $\star$) in the same block of matrix $A$ are equal (Lemma 7 of \cite{feldmann2020}).
    \end{itemize}
We can apply this iteratively to all vertices in $D$ and to other blocks until all blocks either have identical signatures or pairwise distinct signatures.
\end{proof}




Thus, we only need to search over those assignments that satisfy the conditions of \cref{thm:simultaneous}. In \cref{line:LpBasisFill} in \AlgLp and \cref{line:fillnbfor} in \FillNB, when assigning a signature to a vertex we ensure that all vertices in its block have either identical or unique signatures. This eliminates assignments in which a block consists of some repeated signature but not all identical signatures, thereby reducing the search space.

\textbf{Running time:} The \texttt{for} loop in \cref{line:LpFor} has at most $2^{{2k}^2}$ iterations. \FillNB takes time $O(n^2k2^k)$ where $n$ is the number of vertices in the kernelized instance. \InferCWLp solves an LP with $k$ variables and at most $4k^2$ constraints which can be solved in $O(k^4L)$ time where $L$ is the number of bits required for input representation~\cite{LpVaidya89}. \Comp runs in time $O(nk)$ and hence, the total time taken by \FillNB is $O(n^2k2^k)$. Hence, the total time taken by \AlgLp without the \SRs is $O(2^{{2k}^2}2k(k^4L + n^2k2^k))$. Since $n \leq 4^k$ the total running time of \AlgLp as given by ~\cite{cooley2021} is $O(4^{k^2}k^2(32^k +k^3L))$. Thus, if $L=O(2^k)$, this comes to $O(4^{k^2}k^232^k)$.

\SR 0 pushes singleton vertices from reduced blocks to front in time $O(n)$. \SR 1 when assigning a signature to a vertex $v$, loops over the non-neighbors of $v$ and generates the list of forbidden cliques in time $O(nk)$. \SR 2 when assigning a signature to a vertex, compares with the signatures of other vertices in the block in $O(nk)$ time. Thus, the running time of  \AlgLp with \SRs is $O(2^{{2k}^2}2nk^2(k^4L + n^2k2^k))$ Due to the new kernel, we can set $n=k2^k$. This gives an overall running time of $O(4^{k^2}2^kk^4(k^22^{3k} +k^3L))$. Thus, if $L=O(2^k)$ this comes to $O(4^{k^2}k^616^k)$. Thus, the running time reduces significantly and in practice we are able to get at least two orders of magnitude speedup.



\section{Experimental Results} \label{sec:results}

We compared the performance of our kernelization and decomposition algorithms with those of ~\cite{cooley2021}. We use the publicly available Python package of ~\cite{cooley2021} provided by its authors and implement our new algorithms as an extension. We ran all experiments on identical hardware, equipped with 40 CPUs (Intel(R) Xeon(R) Gold 6230 CPU @ 2.10GHz) and 191000 MB of memory, and running CentOS Linux release 7.9.2009. We used Gurobi Optimizer 9.1.2 as the LP solver, parallelized using 40 threads. A timeout of 3600 seconds per instance was used in all experiments. Code and data to replicate all experiments are available at~\cite{codebase}, and will be made open-source before publication.

\textbf{Notation:} We use \KD and \SD to denote our kernelization and decomposition algorithms, respectively. We use \KC to denote the kernelization algorithm of ~\cite{cooley2021}, and \SC (\SCS) for the decomposition algorithm of ~\cite{cooley2021} \textit{without (with)} \SR 0.
Thus, \cricca represents the combination (\KC, \SC), \criccas is (\KC, \SCS) and \decaf is(\KD, \SD). We use $n_C$ and $n_D$ to denote number of vertices remaining in the graph (the kernel) after applying \KC and \KD respectively. For some of our experiments, we subselect instances based on their seed. Details can be found in the description of each experiment.

\textbf{Datasets:} The authors of ~\cite{cooley2021} evaluate \cricca on two biology-inspired synthetic datasets (LV and TF, seeds 0-19); we use the same datasets for our experiments. Details about how these datasets were created are in Section 7 and Appendix D of~\cite{cooley2021}. We assume that every graph has been preprocessed to remove isolated vertices. Each (preprocessed) graph $G$ has a ground-truth $k$ -- the minimum $k$ such that $(G,k)$ is a YES-instance. We distinguish this from $k_{in}$, which denotes the number of desired cliques input to the algorithms; each $(G,k_{in})$ denotes a unique instance of \EWCD. Details about these instances can be found in \cref{table:graphstats}. As in~\cite{cooley2021}, we restrict our main corpus to graphs with $3 \leq k \leq 11$; for $k \in [5,7]$, we generated multiple instances for each $G$ using several values of $k_{in}$ above and below $k$ (E.3 in~\cite{cooley2021}). The number of unique $G$ are shown in column ``$\#G$'' and the number of $(G,k_{in})$ instances are shown in the column ``$\#(G,k_{in})$''. The columns $n$ and $m$ denote the average number of vertices and edges across all instances per $k$ value. For ease of comparison of running times, we do not include instances that timed out on any combination of \KC, \KD, \SCS and \SD in \cref{table:graphstats}. A detailed breakdown of timeouts is in the Appendix (\cref{sec:fastresults}).



\begin{table}[h]
\centering
\begin{adjustbox}{max width=0.47\textwidth}
\begin{tabular}{|l|rrrr||p{1.2cm}|p{1.2cm}|p{1.2cm}|p{1.2cm}|}
\toprule
 \multicolumn{5}{|c||}{}    & \multicolumn{2}{c|}{\KC}    & \multicolumn{2}{c|}{\KD}     \\\cline{6-9}
 \multicolumn{5}{|c||}{}    & \multicolumn{1}{c|}{\SCS}    & \multicolumn{1}{c|}{\SD}    & \multicolumn{1}{c|}{\SCS}    & \multicolumn{1}{c|}{\SD}  \\ \cline{6-9}
 \multicolumn{5}{|c||}{}    & \multicolumn{1}{c|}{\criccas}    & \multicolumn{1}{c|}{}    & \multicolumn{1}{c|}{}    & \multicolumn{1}{c|}{\decaf}  \\ 
 \hline
 \multicolumn{1}{|l|}{$k$}  & \multicolumn{1}{r}{$\#G$}  & \multicolumn{1}{r}{$\#(G,k_{in})$}    & \multicolumn{1}{r}{$n$}   & \multicolumn{1}{r||}{$m$} & \multicolumn{4}{c|}{speedup}  \\
\midrule
3 & 321 & 321 & 198.21 & 38.72 & 1.00 & 2.40 & 2.23 & 2.45 \\
4 & 225 & 225 & 223.13 & 146.69 & 1.00 & 3.60 & 4.74 & 8.81 \\
5 & 300 & 564 & 313.85 & 2677.86 & 1.00 & 6.30 & 147.91 & 669.55 \\
6 & 360 & 734 & 369.28 & 10182.26 & 1.00 & 11.83 & 158.88 & 1379.24 \\
7 & 136 & 274 & 542.28 & 28477.22 & 1.00 & 13.06 & 500.44 & 6061.28 \\
8 & 31 & 31 & 459.77 & 22834.32 & 1.00 & 17.79 & 169.95 & 3421.30 \\
9 & 27 & 27 & 824.78 & 61877.89 & 1.00 & 6.45 & 317.75 & 9316.60 \\
10 & 24 & 24 & 778.25 & 116381.50 & 1.00 & 22.96 & 134.39 & 6876.59 \\
11 & 18 & 18 & 953.00 & 99219.67 & 1.00 & 12.66 & 128.11 & 13475.95 \\
\bottomrule
\end{tabular}
\end{adjustbox}
\caption{Corpus number of instances and average sizes grouped by $k$ value (seeds $[0,9]$
and $k \in [3,11]$). Right columns show average runtime reduction for
combinations of kernels \KC, \KD and decomposition algorithms \SCS, \SD.}
\label{table:graphstats}
\end{table}


\subsection{Smaller Kernel}

To evaluate the effect of \KRfull 2+,
we compared the kernel size and the running time of \KD with those of \KC.
We used the entire dataset (seeds 0-19) of ground-truth $k$ between $3$ and $11$ (inclusive),
with input $k$ ($k_\text{in}$) varied between $0.4 k$ and $1.6k$.
Because the difference between two versions is only the threshold value used in
\KR 2 and \KR 2+, it holds that $n_D \leq n_C$.
Further, if \KC \textit{solves} the problem,
that is, it finds the instance a NO-instance, then so does \KD.

Figure~\ref{fig:exp-kernel} plots the ratio $n_D / n_C$ for each instance,
colored by $k_\text{in}$.
We can see that most instances got shrunk to less than 20\% of $n_C$.
Also we observe that our kernel size, $n_D$ decreases (1) as $n$ increases
and (2) as $k_\text{in}$ increases.
The former is intuitive; \KR 2+ is more effective when the instance is larger,
removing more vertices from the graph.
The latter is explained by the fact that \KC is effective only for blocks that are exponential in size with respect to $k_\text{in}$ and thus, has less impact when $k_\text{in}$ is large, where as \KD is able to reduce all blocks with $> k_{in}$ vertices and hence is effective even when $k_{in}$ is large.



For the running time of the kernelization process, we computed the relative running time i.e. the ratio of the running time of \KD to that of \KC.
The first and third quantiles were $0.99$ and $1.01$, respectively,
and the maximum was $1.24$.
We conclude that there is no significant difference in running times.

\begin{figure}[t]
    \centering
    \includegraphics[scale=0.3]{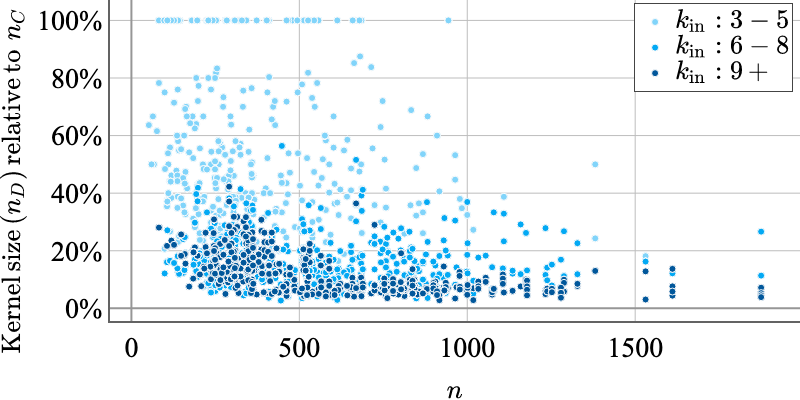}
    \captionof{figure}{%
        Kernel size of $K_D$ relative to that of $K_C$ on entire
        corpus (seeds 0-19), sorted by instance size (prior to kernelization)
        and colored by $k_in$.
    }
    \label{fig:exp-kernel}
\end{figure}

\subsection{Effects of Vertex Reordering}\label{subsec:exp_order}

We found that vertex ordering has a significant impact on the efficiency of
the decomposition process, especially when we obtain a smaller kernel. We experimented with the following orderings.
%

%
\texttt{arbitrary} is the baseline strategy used in \cricca.
It keeps one arbitrary vertex from each block that is reduced and does not reorder the vertices.
\texttt{push\_front} is the one adopted as \SR 0.
It keeps one arbitrary vertex from each block that is reduced and moves the representative vertex to the front of the ordering.
\texttt{push\_back} does the opposite;
it keeps one arbitrary vertex from each block that is reduced and moves the representative vertex to the back.
\texttt{keep\_first} keeps the earliest vertex from each block that is reduced and does not reorder the vertices.

To clearly distinguish the effect of ordering (\SR 0) from the other \SRs, we ran \SCS (\SC with \SR 0) on seed 0 instances, where we experimented with each of the above orderings. We obtained results for both \KD and \KC. Figure~\ref{fig:exp-ordering} plots the log-scale distribution of running time of (\KD, \SCS) for the different ordering strategies compared to \texttt{arbitrary}.
\texttt{push\_front} gave the most speedup (10 to 100 times) and did not timeout on any instances. We believe this is because singleton vertices from reduced blocks tend to be non-neighbors which helps to quickly detect infeasible assignments. We observed similar behavior with the (\KC, \SCS) combination as shown in Figure~\ref{fig:exp-ordering-lp} in the Appendix.

\begin{figure}[]
    \centering
    \includegraphics[scale=0.25]{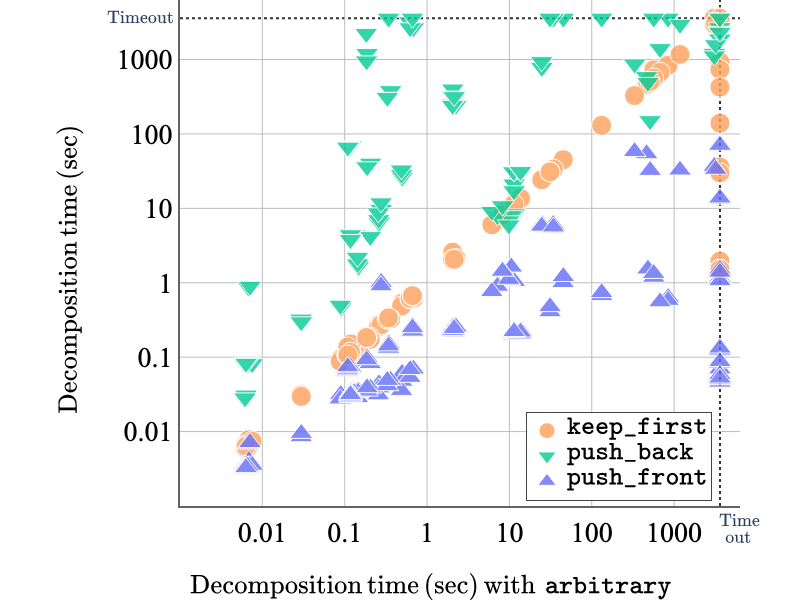}
    \captionof{figure}{%
        Log-log plot of decomposition runtime distribution on instances with seed 0
        using different vertex reordering strategies (and kernel $K_D$).
        Times relative to \texttt{arbitrary}.
    }
    \label{fig:exp-ordering}
\end{figure}






\subsection{Faster Decomposition} \label{sec:fastresults}

\begin{figure}[]
    \centering
    \includegraphics[width=0.4\textwidth]{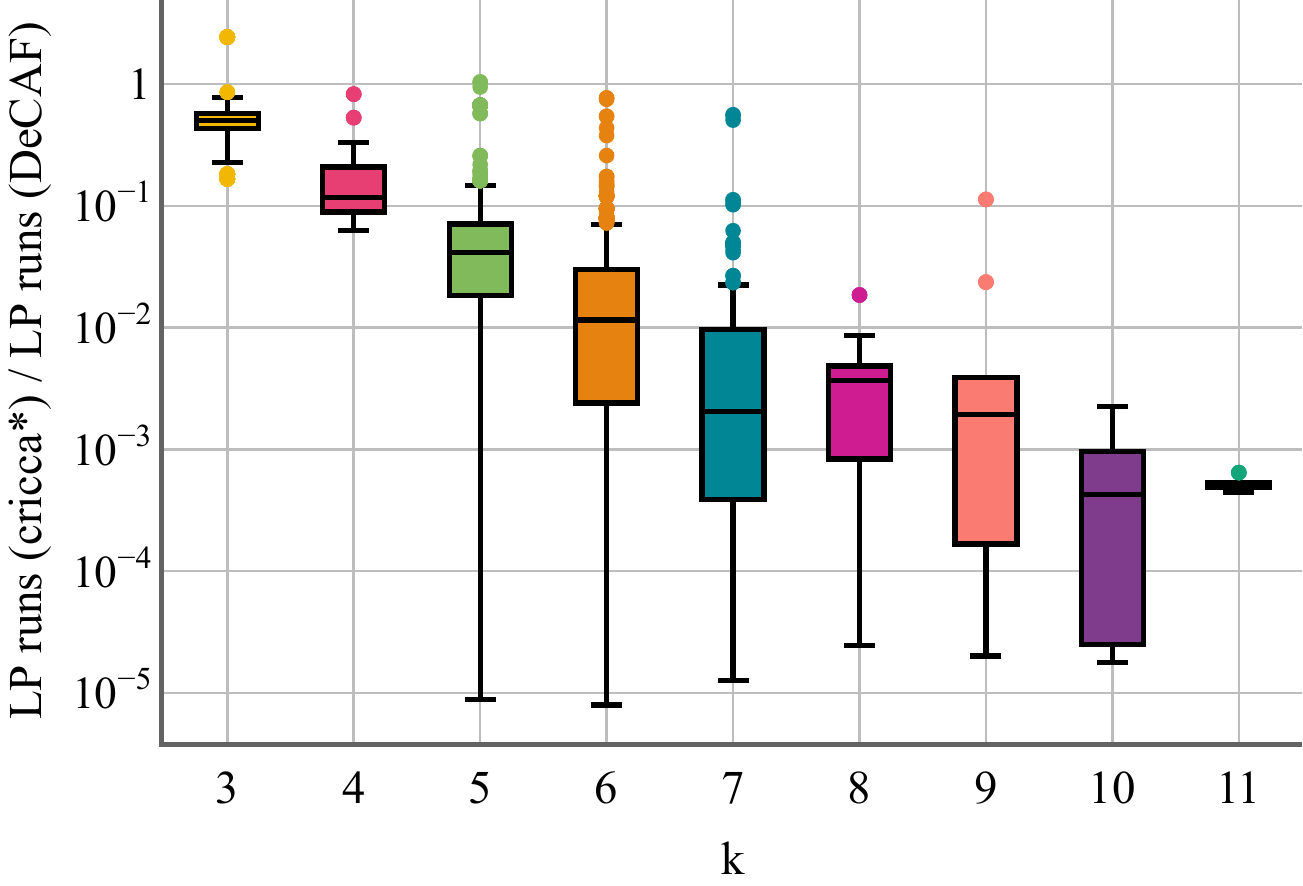}%
\caption{\decaf reduces the number of LP runs of \criccas by two orders of magnitude. We plot the ratio of number of LP runs on all instances that completed in 3600s with both algorithms. Instances are binned by $k$, the number of cliques in the desired decomposition, highlighting that the reduction in LP runs increases with the parameter value.}%
\label{fig:lpruns}%
\vspace{-1em}
\end{figure}

\begin{figure*}[t]
    \centering
    \includegraphics[scale=0.3]{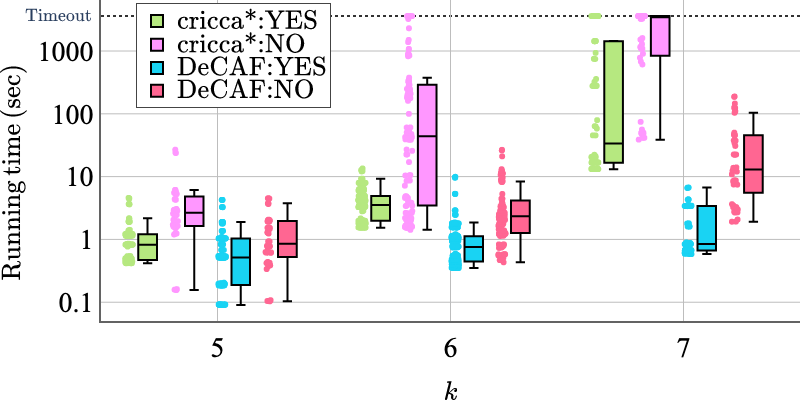}
    \includegraphics[scale=0.3]{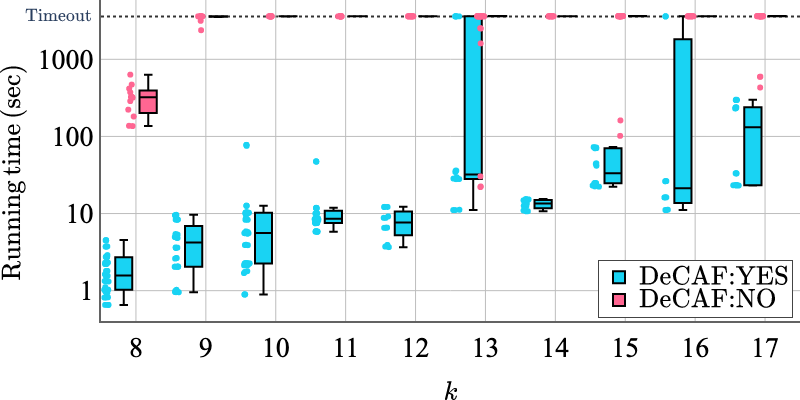}
    \captionof{figure}{%
        Runtime distribution (log-scale) for \criccas and \decaf on corpus of
        YES ($k_\text{in}=k$) and NO ($k_\text{in} = \lceil 0.8 k \rceil$) instances
        with seeds $[0,9]$ and $5 \leq k \leq 7$ (left), and \decaf only on $8 \leq k \leq 17$ (right).
    }
    \label{fig:exp-yesno}
    \vspace{-1em}
\end{figure*}
%

We show results comparing the time taken by \criccas and the time taken by \decaf for decomposing all instances reported in \cref{table:graphstats}. We also do an ablation study and report the effect of the smaller kernel on the running time, with and without search rules.

\textbf{Comparison with \criccas:}~\cref{fig:runtime} gives the ratio of the running time of \criccas and the running time of \decaf for different $k$ (lower is better). This includes the kernelization time as well as the time required for decomposition. For small $k$, we do not see a significant reduction in runtime but for larger $k$ we are able to obtain upto two orders of magnitude reduction in the running time, and the trend indicates that the reduction in runtime increases with $k$. Note that we do not include here information about instances for which either \criccas or \decaf timeout. In the entire corpus of 2556 instances, 297 instances timed out for \criccas and 3 for \decaf. More detailed information about these instances is in the Appendix (\cref{table:timeouts}). Note that the order in which the algorithm considers vertices to assign signatures to impacts the running time. In some cases, despite the smaller kernel and \texttt{push\_front} ordering, the running time can go up in the case of \decaf. Such instances have a ratio $>1$ in \cref{fig:runtime} and extremely rare.

\cref{fig:lpruns} shows the ratio of the number of times the LP solver was invoked by \criccas and the number of times the LP solver was invoked by \decaf. Because of the search rules and the smaller kernel, we are able to obtain upto three orders of magnitude reduction in the number of runs of the LP solver. Similar to running time, we see an increasing trend in the reduction in the search space with $k$.

\textbf{Ablation study:} To study the relative impact of the smaller kernel and the search rules on the running time of the decomposition algorithm, we tested all four combinations of the kernels (\KC and \KD) and decomposition algorithms (\SCS and \SD). Here, we only consider the running time of the decomposition algorithms and not the kernelization algorithms and do not include numbers for any instance that timed out across all four combinations. \cref{table:graphstats} shows the average reduction obtained in all 4 combinations for different $k$. The smaller kernel has a greater impact on the runtime compared to that of the search rules. Moreover, as expected, the average reduction increases as $k$ increases.

\subsection{Scalability}

Finally, we evaluate the overall scalability of our implementation.
In this section, we consider two input k-values,
(1) $k_\text{in}=k$ for YES instances and
(2) $k_\text{input}=\lceil 0.8 k \rceil$ for NO instances.
In the left figure in Figure~\ref{fig:exp-yesno}, we compare four configurations,
the combinations \criccas, \decaf
and two $k_\text{in}$ values for $k \in [5, 6, 7]$ with the instances of seeds $[0,9]$.
We observe that NO instances take longer than YES instances in both versions.
For \criccas, the median runtime reaches the time limit (3600 seconds) with $k=7$ for NO instances. In contrast, \decaf achieves a median of 13 seconds. For YES instances, for $k=7$ there is an order of magnitude difference in the runtimes of the two algorithms.

%

To test for scalability, we ran \decaf on 12 randomly-selected instances for each $k \in [8,17]$.
The right figure in Figure~\ref{fig:exp-yesno} shows the distribution of running time for the YES and NO instances.
The median runtime for NO instances with \decaf hits the time limit at $k=9$. Compare this with time taken by \criccas in the left figure of Figure~\ref{fig:exp-yesno}; it already hits the time limit at $k=7$. Since both \criccas and \decaf timeout for $k \geq 9$, we focus on the YES instances.

For YES instances, \decaf finished within 1000 seconds for all $k$
except for a few ``hard'' instances with $k=13, 16$
(This may happen because the worst-case running time is not polynomially bounded). Compare this with Figure 7 in ~\cite{cooley2021} which shows that \cricca already hits the time limit at $k=9$. We conclude that with \decaf we are able to scale to at least $1.5 \times$ larger $k$ than \cricca for YES instances. Details on the number of LP runs are shown in Figure~\ref{fig:exp-appendix-lp} (Appendix).

\notes{
Remaining 2.25 pages will be results.

\textbf{Prelimineries}
\begin{enumerate}
\item Give details of software, hardware and datasets.
\item Mention that since results for every graph cannot be individually shown, we have grouped the graphs by their k (this is what Madi's paper does). Mention that stats of the datasets (\#, n, m, k) are in Table 1.
\item Describe the experiment setup. What was the pipeline, what k values we tried.
\end{enumerate}

Table 1: \#, n, m, k, $n_{post}$, $k_{post}$, $n_{ker}$, time for Madi's kernel, time for our kernel, time for Madi's decomposition algorithm, time for our decomposition algorithm (for the min k for which instance was YES). All these for two datasets. Use data from Madi's paper that gives the size distribution of the graphs in the two datasets (or collect this ourselves if we are not using the same set of graphs). This will be a big table spanning both columns. Highlight the best timings.

\textbf{Smaller kernel}
\begin{enumerate}
    \item Show that kernel size is much smaller in our case compared to Madi's.
    \item figure: scatter plot, n vs $n_{ker}$, for 6 different values of k, different shade of a color for each k (similar to figure 3 in Madi's paper)
\end{enumerate}

\textbf{Faster decomposition}
\begin{enumerate}
    \item Show that performance rules and smaller kernel significantly reduce overall time (figure 1 in Intro and Table 1 above)
    \item Show that because the kernel is small, even without the performance tuning rules the algorithm is faster (ablation study). Table?
    \item Show that algorithm is fast on both YES as well as NO instances. figure: Similar to figure 9 in Madi's paper, a figure that shows how the complete pipeline performs on NO and YES instances (vary k). 3 different subplots for this that span the whole width of the page.

\end{enumerate}

\textbf{Effect of performance tuning (optional)} Talk about how many blocks were touched by performance tuning rules. Maybe give table.

\textbf{Space usage (optional)}
\begin{enumerate}
\item Small table showing that peak memory usage did not increase by much.
\end{enumerate}

\textbf{Scalability (optional)}
\begin{enumerate}
\item figure to show scalability. 2 subplots, arranged vertically in one column. One subplot: n vs $n_{ker}$, scatter plot (?),  for $k=5,10,15,20$, each k a different shade. Other plot: n vs $n_{ker}$, scatter plot (?),  for $k=5,10,15,20$, each k a different shade.
\end{enumerate}
}

\section{Conclusion}
\notes{
\begin{enumerate}
\item Conclude by summarising about the smaller kernel, rules for decomposition and faster decomposition algorithm.
\item Mention as future work the optimization version? Though this might sound like a shortcoming of our work in this paper (that we address the exact case when it is not very natural)
\end{enumerate}
}

We gave a $k2^k$ kernel for \EWCD, an exponential reduction over the best-known approach (a $4^k$ kernel).
Additionally, we describe new search rules that reduce the decomposition search space for the problem. Our algorithm \decaf combines these to achieve two orders of magnitude speedup over the state-of-the-art algorithm, \cricca; it reduces the number of LP runs by up to three orders of magnitude. Since our approach prunes away large parts of the search space, we are able to scale to solve instances with a larger number of ground-truth cliques ($k$) than previously possible, though the approach struggles to solve NO-instances in this setting. We believe additional rules for quickly detecting infeasibility will help the algorithm to achieve similar scalability when $k_{in} < k$.

\bibliographystyle{ACM-Reference-Format}
\bibliography{ewcc}


\begin{thebibliography}{30}


\ifx \showCODEN    \undefined \def \showCODEN     #1{\unskip}     \fi
\ifx \showDOI      \undefined \def \showDOI       #1{#1}\fi
\ifx \showISBNx    \undefined \def \showISBNx     #1{\unskip}     \fi
\ifx \showISBNxiii \undefined \def \showISBNxiii  #1{\unskip}     \fi
\ifx \showISSN     \undefined \def \showISSN      #1{\unskip}     \fi
\ifx \showLCCN     \undefined \def \showLCCN      #1{\unskip}     \fi
\ifx \shownote     \undefined \def \shownote      #1{#1}          \fi
\ifx \showarticletitle \undefined \def \showarticletitle #1{#1}   \fi
\ifx \showURL      \undefined \def \showURL       {\relax}        \fi
\providecommand\bibfield[2]{#2}
\providecommand\bibinfo[2]{#2}
\providecommand\natexlab[1]{#1}
\providecommand\showeprint[2][]{arXiv:#2}

\bibitem[\protect\citeauthoryear{Anon.}{Anon.}{[n. d.]}]%
        {codebase}
\bibfield{author}{\bibinfo{person}{Anon.}} \bibinfo{year}{[n. d.]}\natexlab{}.
\newblock \bibinfo{title}{accompanying source code}.
\newblock
  \bibinfo{howpublished}{\url{https://drive.google.com/file/d/1-IaglDoTtmZ9Jo-dc6RvlhjR_aVS_NYz/view}}.
    (\bibinfo{year}{[n. d.]}).
\newblock


\bibitem[\protect\citeauthoryear{Babai and Frankl}{Babai and Frankl}{1988}]%
        {babai1988linear}
\bibfield{author}{\bibinfo{person}{L{\'a}szl{\'o} Babai} {and}
  \bibinfo{person}{P{\'e}ter Frankl}.} \bibinfo{year}{1988}\natexlab{}.
\newblock \bibinfo{booktitle}{\emph{Linear algebra methods in combinatorics}}.
\newblock \bibinfo{publisher}{University of Chicago}.
\newblock


\bibitem[\protect\citeauthoryear{Chandran, Issac, and Karrenbauer}{Chandran
  et~al\mbox{.}}{2017}]%
        {Chandran2016Biclique}
\bibfield{author}{\bibinfo{person}{S. Chandran}, \bibinfo{person}{D. Issac},
  {and} \bibinfo{person}{A. Karrenbauer}.} \bibinfo{year}{2017}\natexlab{}.
\newblock \showarticletitle{On the parameterized complexity of biclique cover
  and partition}. In \bibinfo{booktitle}{\emph{11th International Symposium on
  Parameterized and Exact Computation (IPEC 2016)}}.
  \bibinfo{publisher}{Schloss Dagstuhl-Leibniz-Zentrum fuer Informatik}.
\newblock


\bibitem[\protect\citeauthoryear{Chen, Song, Tao, and Zhang}{Chen
  et~al\mbox{.}}{2021}]%
        {chen2021Instahide}
\bibfield{author}{\bibinfo{person}{S. Chen}, \bibinfo{person}{Z. Song},
  \bibinfo{person}{R. Tao}, {and} \bibinfo{person}{R. Zhang}.}
  \bibinfo{year}{2021}\natexlab{}.
\newblock \bibinfo{title}{Symmetric Boolean Factor Analysis with Applications
  to InstaHide. arXiv}.  (\bibinfo{year}{2021}).
\newblock
\showeprint[arxiv]{2102.01570}
\newblock
\shownote{preprint.}


\bibitem[\protect\citeauthoryear{Collado-Torres, Nellore, Kammers, Ellis, Taub,
  Hansen, Jaffe, Langmead, and Leek}{Collado-Torres et~al\mbox{.}}{2017}]%
        {ColladoTorres2017}
\bibfield{author}{\bibinfo{person}{L. Collado-Torres}, \bibinfo{person}{A.
  Nellore}, \bibinfo{person}{K. Kammers}, \bibinfo{person}{S. Ellis},
  \bibinfo{person}{M. Taub}, \bibinfo{person}{K. Hansen}, \bibinfo{person}{A.
  Jaffe}, \bibinfo{person}{B. Langmead}, {and} \bibinfo{person}{J. Leek}.}
  \bibinfo{year}{2017}\natexlab{}.
\newblock \showarticletitle{Reproducible {RNA-seq} analysis using recount2}.
\newblock \bibinfo{journal}{\emph{Nat. Biotechnol.}} \bibinfo{volume}{35},
  \bibinfo{number}{4} (\bibinfo{year}{2017}), \bibinfo{pages}{319--321}.
\newblock


\bibitem[\protect\citeauthoryear{Cooley, Greene, Issac, Pividori, and
  Sullivan}{Cooley et~al\mbox{.}}{2021}]%
        {cooley2021}
\bibfield{author}{\bibinfo{person}{Madison Cooley}, \bibinfo{person}{Casey~S.
  Greene}, \bibinfo{person}{Davis Issac}, \bibinfo{person}{Milton Pividori},
  {and} \bibinfo{person}{Blair~D. Sullivan}.} \bibinfo{year}{2021}\natexlab{}.
\newblock \showarticletitle{Parameterized algorithms for identifying gene
  co-expression modules via weighted clique decomposition}. In
  \bibinfo{booktitle}{\emph{Proceedings of the 2021 {SIAM} Conference on
  Applied and Computational Discrete Algorithms, {ACDA} 2021, Virtual
  Conference, July 19-21, 2021}}. \bibinfo{pages}{111--122}.
\newblock


\bibitem[\protect\citeauthoryear{de~Bruijn and Erd{\"o}s}{de~Bruijn and
  Erd{\"o}s}{1948}]%
        {de1948combinatorial}
\bibfield{author}{\bibinfo{person}{Nicolaas~G de Bruijn} {and}
  \bibinfo{person}{Paul Erd{\"o}s}.} \bibinfo{year}{1948}\natexlab{}.
\newblock \showarticletitle{On a combinatorial problem}.
\newblock \bibinfo{journal}{\emph{Proceedings of the Section of Sciences of the
  Koninklijke Nederlandse Akademie van Wetenschappen te Amsterdam}}
  \bibinfo{volume}{51}, \bibinfo{number}{10} (\bibinfo{year}{1948}),
  \bibinfo{pages}{1277--1279}.
\newblock


\bibitem[\protect\citeauthoryear{de~Caen and Gregory}{de~Caen and
  Gregory}{1985}]%
        {de1985partitions}
\bibfield{author}{\bibinfo{person}{D de Caen} {and} \bibinfo{person}{DA
  Gregory}.} \bibinfo{year}{1985}\natexlab{}.
\newblock \showarticletitle{Partitions of the edge-set of a multigraph by
  complete subgraphs}.
\newblock \bibinfo{journal}{\emph{Congressus Numerantium}}
  \bibinfo{volume}{47} (\bibinfo{year}{1985}), \bibinfo{pages}{255--263}.
\newblock


\bibitem[\protect\citeauthoryear{de~Leeuw, Mooij, Heskes, and
  Posthuma}{de~Leeuw et~al\mbox{.}}{2015}]%
        {Leeuw2015}
\bibfield{author}{\bibinfo{person}{Christiaan~A. de Leeuw},
  \bibinfo{person}{Joris~M. Mooij}, \bibinfo{person}{Tom Heskes}, {and}
  \bibinfo{person}{Danielle Posthuma}.} \bibinfo{year}{2015}\natexlab{}.
\newblock \showarticletitle{MAGMA: Generalized Gene-Set Analysis of GWAS Data}.
\newblock \bibinfo{journal}{\emph{PLOS Computational Biology}}
  \bibinfo{volume}{11}, \bibinfo{number}{4} (\bibinfo{year}{2015}),
  \bibinfo{pages}{e1004219}.
\newblock


\bibitem[\protect\citeauthoryear{Dozmorov, Cresswell, Bacanu, Craver, Reimers,
  and Kendler}{Dozmorov et~al\mbox{.}}{2020}]%
        {Dozmorov2020}
\bibfield{author}{\bibinfo{person}{M. Dozmorov}, \bibinfo{person}{K.
  Cresswell}, \bibinfo{person}{S. Bacanu}, \bibinfo{person}{C. Craver},
  \bibinfo{person}{M. Reimers}, {and} \bibinfo{person}{K.~S. Kendler}.}
  \bibinfo{year}{2020}\natexlab{}.
\newblock \showarticletitle{A method for estimating coherence of molecular
  mechanisms in major human disease and traits}.
\newblock \bibinfo{journal}{\emph{BMC Bioinformatics}} \bibinfo{volume}{21},
  \bibinfo{number}{1} (\bibinfo{year}{2020}), \bibinfo{pages}{473}.
\newblock
\showISSN{1471-2105}


\bibitem[\protect\citeauthoryear{Feldmann, Isaac, and Rai}{Feldmann
  et~al\mbox{.}}{2020}]%
        {feldmann2020}
\bibfield{author}{\bibinfo{person}{A.~E. Feldmann}, \bibinfo{person}{D. Isaac},
  {and} \bibinfo{person}{A. Rai}.} \bibinfo{year}{2020}\natexlab{}.
\newblock \showarticletitle{Fixed-Parameter Tractability of the Weighted Edge
  Clique Partition Problem}. In \bibinfo{booktitle}{\emph{15th International
  Symposium on Parameterized and Exact Computation (IPEC 2020)}}.
  \bibinfo{publisher}{Schloss Dagstuhl-Leibniz-Zentrum f{\"u}r Informatik}.
\newblock


\bibitem[\protect\citeauthoryear{Greene, Krishnan, Wong, Ricciotti, Zelaya,
  Himmelstein, Zhang, Hartmann, Zaslavsky, Sealfon, et~al\mbox{.}}{Greene
  et~al\mbox{.}}{2015}]%
        {Greene2015}
\bibfield{author}{\bibinfo{person}{Casey~S Greene}, \bibinfo{person}{Arjun
  Krishnan}, \bibinfo{person}{Aaron~K Wong}, \bibinfo{person}{Emanuela
  Ricciotti}, \bibinfo{person}{Rene~A Zelaya}, \bibinfo{person}{Daniel~S
  Himmelstein}, \bibinfo{person}{Ran Zhang}, \bibinfo{person}{Boris~M
  Hartmann}, \bibinfo{person}{Elena Zaslavsky}, \bibinfo{person}{Stuart~C
  Sealfon}, {et~al\mbox{.}}} \bibinfo{year}{2015}\natexlab{}.
\newblock \showarticletitle{Understanding multicellular function and disease
  with human tissue-specific networks}.
\newblock \bibinfo{journal}{\emph{Nature genetics}} \bibinfo{volume}{47},
  \bibinfo{number}{6} (\bibinfo{year}{2015}), \bibinfo{pages}{569--576}.
\newblock


\bibitem[\protect\citeauthoryear{Lachmann, Torre, Keenan, Jagodnik, Lee, Wang,
  Silverstein, and Ma’ayan}{Lachmann et~al\mbox{.}}{2018}]%
        {Lachmann2018}
\bibfield{author}{\bibinfo{person}{Alexander Lachmann}, \bibinfo{person}{Denis
  Torre}, \bibinfo{person}{Alexandra~B. Keenan}, \bibinfo{person}{Kathleen~M.
  Jagodnik}, \bibinfo{person}{Hoyjin~J. Lee}, \bibinfo{person}{Lily Wang},
  \bibinfo{person}{Moshe~C. Silverstein}, {and} \bibinfo{person}{Avi
  Ma’ayan}.} \bibinfo{year}{2018}\natexlab{}.
\newblock \showarticletitle{Massive mining of publicly available RNA-seq data
  from human and mouse}.
\newblock \bibinfo{journal}{\emph{Nature Communications}} \bibinfo{volume}{9},
  \bibinfo{number}{1} (\bibinfo{year}{2018}), \bibinfo{pages}{1366}.
\newblock
\showISSN{2041-1723}


\bibitem[\protect\citeauthoryear{Ma, Wallis, and Wu}{Ma et~al\mbox{.}}{1988}]%
        {ma1988complexity}
\bibfield{author}{\bibinfo{person}{SH Ma}, \bibinfo{person}{WD Wallis}, {and}
  \bibinfo{person}{JL Wu}.} \bibinfo{year}{1988}\natexlab{}.
\newblock \showarticletitle{The complexity of the clique partition number
  problem}.
\newblock \bibinfo{journal}{\emph{Congr. Numer}}  \bibinfo{volume}{67}
  (\bibinfo{year}{1988}), \bibinfo{pages}{59--66}.
\newblock


\bibitem[\protect\citeauthoryear{Majumdar}{Majumdar}{1953}]%
        {majumdar1953some}
\bibfield{author}{\bibinfo{person}{Kulendra~N Majumdar}.}
  \bibinfo{year}{1953}\natexlab{}.
\newblock \showarticletitle{On some theorems in combinatorics relating to
  incomplete block designs}.
\newblock \bibinfo{journal}{\emph{The Annals of Mathematical Statistics}}
  \bibinfo{volume}{24}, \bibinfo{number}{3} (\bibinfo{year}{1953}),
  \bibinfo{pages}{377--389}.
\newblock


\bibitem[\protect\citeauthoryear{Mao, Zaslavsky, Hartmann, Sealfon, and
  Chikina}{Mao et~al\mbox{.}}{2019}]%
        {Mao2019}
\bibfield{author}{\bibinfo{person}{Weiguang Mao}, \bibinfo{person}{Elena
  Zaslavsky}, \bibinfo{person}{Boris~M. Hartmann}, \bibinfo{person}{Stuart~C.
  Sealfon}, {and} \bibinfo{person}{Maria Chikina}.}
  \bibinfo{year}{2019}\natexlab{}.
\newblock \showarticletitle{Pathway-level information extractor (PLIER) for
  gene expression data}.
\newblock \bibinfo{journal}{\emph{Nature Methods}} \bibinfo{volume}{16},
  \bibinfo{number}{7} (\bibinfo{year}{2019}), \bibinfo{pages}{607--610}.
\newblock
\showISSN{1548-7105}


\bibitem[\protect\citeauthoryear{Mathew and Mishra}{Mathew and Mishra}{2020}]%
        {mathew2020combinatorial}
\bibfield{author}{\bibinfo{person}{Rogers Mathew} {and}
  \bibinfo{person}{Tapas~Kumar Mishra}.} \bibinfo{year}{2020}\natexlab{}.
\newblock \showarticletitle{A Combinatorial Proof of Fisher's Inequality}.
\newblock \bibinfo{journal}{\emph{Graphs and Combinatorics}}
  \bibinfo{volume}{36}, \bibinfo{number}{6} (\bibinfo{year}{2020}),
  \bibinfo{pages}{1953--1956}.
\newblock


\bibitem[\protect\citeauthoryear{Menche, Sharma, Kitsak, Ghiassian, Vidal,
  Loscalzo, and Barabási}{Menche et~al\mbox{.}}{2015}]%
        {Menche2015}
\bibfield{author}{\bibinfo{person}{Jörg Menche}, \bibinfo{person}{Amitabh
  Sharma}, \bibinfo{person}{Maksim Kitsak}, \bibinfo{person}{Susan~Dina
  Ghiassian}, \bibinfo{person}{Marc Vidal}, \bibinfo{person}{Joseph Loscalzo},
  {and} \bibinfo{person}{Albert-László Barabási}.}
  \bibinfo{year}{2015}\natexlab{}.
\newblock \showarticletitle{Uncovering disease-disease relationships through
  the incomplete interactome}.
\newblock \bibinfo{journal}{\emph{Science}} \bibinfo{volume}{347},
  \bibinfo{number}{6224} (\bibinfo{date}{Feb.} \bibinfo{year}{2015}),
  \bibinfo{pages}{1257601}.
\newblock


\bibitem[\protect\citeauthoryear{Mercatelli, Scalambra, Triboli, Ray, and
  Giorgi}{Mercatelli et~al\mbox{.}}{2020}]%
        {Mercatelli2020}
\bibfield{author}{\bibinfo{person}{D. Mercatelli}, \bibinfo{person}{L.
  Scalambra}, \bibinfo{person}{L. Triboli}, \bibinfo{person}{F. Ray}, {and}
  \bibinfo{person}{F.~M. Giorgi}.} \bibinfo{year}{2020}\natexlab{}.
\newblock \showarticletitle{Gene regulatory network inference resources: A
  practical overview}.
\newblock \bibinfo{journal}{\emph{Biochimica et Biophysica Acta (BBA) - Gene
  Regulatory Mechanisms}} \bibinfo{volume}{1863}, \bibinfo{number}{6}
  (\bibinfo{year}{2020}), \bibinfo{pages}{194430}.
\newblock
\showISSN{1874-9399}


\bibitem[\protect\citeauthoryear{Moutier, Vandaele, and Gillis}{Moutier
  et~al\mbox{.}}{2021}]%
        {offdiagonal}
\bibfield{author}{\bibinfo{person}{F. Moutier}, \bibinfo{person}{A. Vandaele},
  {and} \bibinfo{person}{N. Gillis}.} \bibinfo{year}{2021}\natexlab{}.
\newblock \showarticletitle{Off-diagonal symmetric nonnegative matrix
  factorization}.
\newblock \bibinfo{journal}{\emph{Numerical Algorithms, pp}}
  (\bibinfo{year}{2021}), \bibinfo{pages}{1--25}.
\newblock


\bibitem[\protect\citeauthoryear{Mujuni and Rosamond}{Mujuni and
  Rosamond}{2008}]%
        {MujuniRosamond2008Kernel}
\bibfield{author}{\bibinfo{person}{E. Mujuni} {and} \bibinfo{person}{F.
  Rosamond}.} \bibinfo{year}{2008}\natexlab{}.
\newblock \showarticletitle{Parameterized complexity of the clique partition
  problem}. In \bibinfo{booktitle}{\emph{Proceedings of the fourteenth
  symposium on Computing: the Australasian theory-Volume 77}}.
  \bibinfo{pages}{75--78}.
\newblock


\bibitem[\protect\citeauthoryear{Nelson, Tipney, Painter, Shen, Nicoletti,
  Shen, Floratos, Sham, Li, Wang, et~al\mbox{.}}{Nelson et~al\mbox{.}}{2015}]%
        {Nelson2015-ag}
\bibfield{author}{\bibinfo{person}{Matthew~R Nelson}, \bibinfo{person}{Hannah
  Tipney}, \bibinfo{person}{Jeffery~L Painter}, \bibinfo{person}{Judong Shen},
  \bibinfo{person}{Paola Nicoletti}, \bibinfo{person}{Yufeng Shen},
  \bibinfo{person}{Aris Floratos}, \bibinfo{person}{Pak~Chung Sham},
  \bibinfo{person}{Mulin~Jun Li}, \bibinfo{person}{Junwen Wang},
  {et~al\mbox{.}}} \bibinfo{year}{2015}\natexlab{}.
\newblock \showarticletitle{The support of human genetic evidence for approved
  drug indications}.
\newblock \bibinfo{journal}{\emph{Nature genetics}} \bibinfo{volume}{47},
  \bibinfo{number}{8} (\bibinfo{year}{2015}), \bibinfo{pages}{856--860}.
\newblock


\bibitem[\protect\citeauthoryear{Stinson}{Stinson}{2004}]%
        {stinson2004introduction}
\bibfield{author}{\bibinfo{person}{Douglas~R Stinson}.}
  \bibinfo{year}{2004}\natexlab{}.
\newblock \showarticletitle{Introduction to balanced incomplete block designs}.
\newblock \bibinfo{journal}{\emph{Combinatorial Designs: Constructions and
  Analysis}} (\bibinfo{year}{2004}), \bibinfo{pages}{1--21}.
\newblock


\bibitem[\protect\citeauthoryear{Taroni, Grayson, Hu, Eddy, Kretzler, Merkel,
  and Greene}{Taroni et~al\mbox{.}}{2019}]%
        {Taroni2019}
\bibfield{author}{\bibinfo{person}{Jaclyn~N. Taroni}, \bibinfo{person}{Peter~C.
  Grayson}, \bibinfo{person}{Qiwen Hu}, \bibinfo{person}{Sean Eddy},
  \bibinfo{person}{Matthias Kretzler}, \bibinfo{person}{Peter~A. Merkel}, {and}
  \bibinfo{person}{Casey~S. Greene}.} \bibinfo{year}{2019}\natexlab{}.
\newblock \showarticletitle{MultiPLIER: A Transfer Learning Framework for
  Transcriptomics Reveals Systemic Features of Rare Disease}.
\newblock \bibinfo{journal}{\emph{Cell Systems}} \bibinfo{volume}{8},
  \bibinfo{number}{5} (\bibinfo{year}{2019}), \bibinfo{pages}{380--394.e4}.
\newblock
\showISSN{2405-4712}


\bibitem[\protect\citeauthoryear{Vaidya}{Vaidya}{1989}]%
        {LpVaidya89}
\bibfield{author}{\bibinfo{person}{P.~M. Vaidya}.}
  \bibinfo{year}{1989}\natexlab{}.
\newblock \showarticletitle{Speeding-up linear programming using fast matrix
  multiplication}. In \bibinfo{booktitle}{\emph{30th Annual Symposium on
  Foundations of Computer Science}}. \bibinfo{pages}{332--337}.
\newblock


\bibitem[\protect\citeauthoryear{Venkatesan, Rual, Vazquez, Stelzl, Lemmens,
  Hirozane-Kishikawa, Hao, Zenkner, Xin, Goh, et~al\mbox{.}}{Venkatesan
  et~al\mbox{.}}{2009}]%
        {Venkatesan2009}
\bibfield{author}{\bibinfo{person}{Kavitha Venkatesan},
  \bibinfo{person}{Jean-Francois Rual}, \bibinfo{person}{Alexei Vazquez},
  \bibinfo{person}{Ulrich Stelzl}, \bibinfo{person}{Irma Lemmens},
  \bibinfo{person}{Tomoko Hirozane-Kishikawa}, \bibinfo{person}{Tong Hao},
  \bibinfo{person}{Martina Zenkner}, \bibinfo{person}{Xiaofeng Xin},
  \bibinfo{person}{Kwang-Il Goh}, {et~al\mbox{.}}}
  \bibinfo{year}{2009}\natexlab{}.
\newblock \showarticletitle{An empirical framework for binary interactome
  mapping}.
\newblock \bibinfo{journal}{\emph{Nature methods}} \bibinfo{volume}{6},
  \bibinfo{number}{1} (\bibinfo{year}{2009}), \bibinfo{pages}{83--90}.
\newblock


\bibitem[\protect\citeauthoryear{Visscher, Wray, Zhang, Sklar, McCarthy, Brown,
  and Yang}{Visscher et~al\mbox{.}}{2017}]%
        {Visscher2017-zd}
\bibfield{author}{\bibinfo{person}{Peter~M Visscher}, \bibinfo{person}{Naomi~R
  Wray}, \bibinfo{person}{Qian Zhang}, \bibinfo{person}{Pamela Sklar},
  \bibinfo{person}{Mark~I McCarthy}, \bibinfo{person}{Matthew~A Brown}, {and}
  \bibinfo{person}{Jian Yang}.} \bibinfo{year}{2017}\natexlab{}.
\newblock \showarticletitle{10 Years of {GWAS} Discovery: Biology, Function,
  and Translation}.
\newblock \bibinfo{journal}{\emph{Am. J. Hum. Genet.}} \bibinfo{volume}{101},
  \bibinfo{number}{1} (\bibinfo{year}{2017}), \bibinfo{pages}{5--22}.
\newblock


\bibitem[\protect\citeauthoryear{Xiao, Moreno-Moral, Rotival, Bottolo, and
  Petretto}{Xiao et~al\mbox{.}}{2014}]%
        {xiao2014multi}
\bibfield{author}{\bibinfo{person}{X. Xiao}, \bibinfo{person}{A. Moreno-Moral},
  \bibinfo{person}{M. Rotival}, \bibinfo{person}{L. Bottolo}, {and}
  \bibinfo{person}{E. Petretto}.} \bibinfo{year}{2014}\natexlab{}.
\newblock \showarticletitle{Multi-tissue analysis of co-expression networks by
  higher-order generalized singular value decomposition identifies functionally
  coherent transcriptional modules}.
\newblock \bibinfo{journal}{\emph{PLoS genetics}} \bibinfo{volume}{10},
  \bibinfo{number}{1} (\bibinfo{year}{2014}), \bibinfo{pages}{e1004006}.
\newblock


\bibitem[\protect\citeauthoryear{Yan, Mehan, Huang, Waterman, Yu, and Zhou}{Yan
  et~al\mbox{.}}{2007}]%
        {yan2007graph}
\bibfield{author}{\bibinfo{person}{X. Yan}, \bibinfo{person}{M. Mehan},
  \bibinfo{person}{Y. Huang}, \bibinfo{person}{M. Waterman},
  \bibinfo{person}{P. Yu}, {and} \bibinfo{person}{X.~J. Zhou}.}
  \bibinfo{year}{2007}\natexlab{}.
\newblock \showarticletitle{A graph-based approach to systematically
  reconstruct human transcriptional regulatory modules}.
\newblock \bibinfo{journal}{\emph{Bioinformatics}} \bibinfo{volume}{23},
  \bibinfo{number}{13} (\bibinfo{year}{2007}), \bibinfo{pages}{i577--i586}.
\newblock


\bibitem[\protect\citeauthoryear{Zhang, Wang, and Ahn}{Zhang
  et~al\mbox{.}}{2013}]%
        {zhang2013symmetricBMF}
\bibfield{author}{\bibinfo{person}{Z.~Y. Zhang}, \bibinfo{person}{Y. Wang},
  {and} \bibinfo{person}{Y.~Y. Ahn}.} \bibinfo{year}{2013}\natexlab{}.
\newblock \showarticletitle{Overlapping community detection in complex networks
  using symmetric binary matrix factorization}.
\newblock \bibinfo{journal}{\emph{Physical Review E}}  \bibinfo{volume}{87}
  (\bibinfo{year}{2013}), \bibinfo{pages}{6}.
\newblock


\end{thebibliography}

\newpage

\newpage
\appendix

\section{Appendix}

\subsection{Algorithms}

\begin{algorithm}[!h]
\begin{algorithmic}[1]
    \For {each non-null row $B_j$}\label{line:compfor}
    \If {$v^TWB_j\neq A_{ij}$}\label{line:compmulti}
        \Return false
        \EndIf
    \EndFor
    \If {$v^TWv\not\es A_{ii}$}
        \Return false
        \EndIf
    \State \Return true

\end{algorithmic}
\caption[]{\hspace*{-4.3pt}{.} \Comp($A$, $\widetilde{B}$, $W$, $i$, $v$)}\label{iWcompatible}
\label{alg:comp}
\end{algorithm}

\begin{algorithm}[!h]
\begin{algorithmic}[1]
    \State \textbf{let} $\cw_1,\cdots,\cw_k \geq 0$ be variables of the LP
    \For {all pairs of non-null rows $\basis_i,\basis_j$ s.t. $A_{ij}\neq \star$}
    \State Add LP constraint $\sum_{1\le q\le k}\basis_{iq}\basis_{jq}\cw_q = A_{ij}$
    \EndFor
    \If {the LP is infeasible}
        \Return {the null matrix}
    \Else {}
         \Return {the diagonal matrix given by $\cw_1,\ldots, \cw_k$}
    \EndIf
\end{algorithmic}
\caption[]{\hspace*{-4.3pt}{.} \InferCWLp($A$, $\widetilde{B}$)}\label{InferCliqWtsLp}
\label{alg:InferCWLp}
\end{algorithm}

\subsection{Proofs from Section 4}\label{app:proofs}

\begin{proof}[Proof of \cref{lem:correctness-rr2-yes}]
    We will prove that $B_u^TWB_v\es A_{uv}$ for all $u,v\in V(G)$, which guarantees validity. Let $D$ be a block reduced by \cref{rr:enhanced} and let $i$ be the representative vertex. There are 3 cases:
    \begin{itemize}
    \item $u,v\notin D$: Then $B_u^TWB_v={B'_{u}}^{T}W{B'_{v}}\es A'_{uv}=A_{uv}$.
    \item $u\in D,v\notin D$: Then $B_u^TWB_v={B'_u}^TWB'_i\es A'_{ui}=A_{ui}$.
    \item $u,v\in D$: If $A_{uv} = \star$ then this case is trivial. Assume $A_{uv}\neq \star$. Lemma 7 of ~\cite{feldmann2020} states that any two non-$\star$ entries in the same block of matrix $A$ must be equal. Hence, $B_u^TWB_v={B'_i}^TWB'_i\es A'_{ii}=A_{ij}= A_{uv}$.
    \end{itemize}
    This completes the proof that $(B,W)$ is a solution to $(A,k)$.
\end{proof}

\begin{proof}[Proof of \cref{lem:correctness-rr2-no}]
    Let $(B,W)$ be a solution of $(A,k)$.
    Since the block $D$ contains more than $k$ rows, by \cref{thm:main}, $D$ must be an identical block. Thus, there exist row indices $p,q \in D$ such that $B_p=B_q$.
    We define a solution $(B',W)$ for $(A',k)$ as $B'_u:=B_u$ for all $u\in V(G')\setminus\left\{ i \right\} $ and $B'_i:=B_p$, where $i$ is the representative vertex of $D$.

    To prove that $(B',W)$ is indeed a valid solution for $(A',k)$,
    it is sufficient to show that ${B'_u}^TWB'_v\es A'_{uv}$ for all $u,v\in V(G')$.  There are 3 cases to consider:
    \begin{itemize}
    \item $u,v\neq i$: Then ${B'_u}^TWB'_v={B_{u}}^{T}W{B_{v}}\es A_{uv}=A'_{uv}$.
    \item $u=i$,$v\neq i$: Then ${B'_i}^TWB_v={B_p}^TWB_v\es A_{pv}=A_{iv}=A'_{iv}$,
    where the second-to-last equality follows from $p$ and $i$ being in the same block $D$.
    \item $u=v=i$: In this case, ${B'_i}^TWB'_i={B_p}^TWB_p=B_p^TWB_q= A_{pq}\es A_{ij}= A'_{ii}$.
    As in the proof of \cref{lem:correctness-rr2-yes} the $\es$ here follows because any two entries (that are not $\star$) in the same block of matrix $A$ are equal (Lemma 7 of \cite{feldmann2020}). The third equality here is an equality (and not only a `$\es$ equivalence') as $A_{pq}$ is not a diagonal entry.
    \end{itemize}
    This completes the proof that if $(A,k)$ is a YES-instance then $(A',k)$ is also a YES-instance.
\end{proof}





\subsection{Additional Experimental Results}

This section includes supplemental information on instances that timed out, as well as the results of
vertex reordering strategies on $K_C$ and the number of LP runs on the corpus of instances from Section 6.4.

\textbf{Timeouts:} The number of timeouts with \criccas as well as \decaf for different $k$ are given in \cref{table:timeouts}.

\begin{table}[h]
\centering
\begin{adjustbox}{max width=0.47\textwidth}
\begin{tabular}{|l|r|r|r|r|r|r|r|r|r|}
\toprule
\multicolumn{1}{|l|}{}  & \multicolumn{1}{r|}{3}  & \multicolumn{1}{r|}{4} & \multicolumn{1}{r|}{5} & \multicolumn{1}{r|}{6} & \multicolumn{1}{r|}{7} & \multicolumn{1}{r|}{8} & \multicolumn{1}{r|}{9} & \multicolumn{1}{r|}{10} & \multicolumn{1}{r|}{11} \\
\midrule
\criccas & 0 & 0 & 21 & 46 & 132 & 20 & 30 & 27 & 21 \\
\decaf & 0 & 0  & 0 & 0 & 0 & 0 & 3  & 0 & 0 \\
\bottomrule
\end{tabular}
\end{adjustbox}
\caption{Table showing the number of instances in the main corpus that timed out for \criccas and \decaf for $3 \leq k \leq 11$.}
\label{table:timeouts}
\end{table}

\textbf{Impact of Vertex ReOrdering on $K_C$:} We also evaluated all four vertex ordering strategies using kernelized instances produced by the algorithm in ~\cite{cooley2021} using the instances with seed 0. While none of them significantly out-performed \texttt{arbitrary}, we note that \texttt{push\_front} also did not degrade performance.

\begin{figure}[ht]
    \centering
    \includegraphics[scale=0.3]{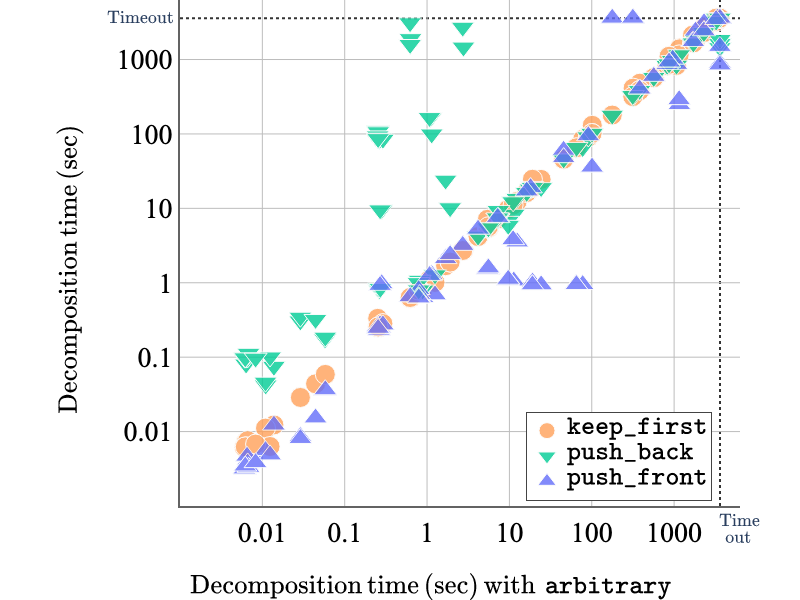}
    \captionof{figure}{%
        Log-scale plot of decomposition running time distribution with
        different vertex reordering strategies compared to \texttt{arbitrary}.
        Tested with the instances of seed 0, $k \in [3,11]$ and the original kernels ($K_C$).
    }
    \label{fig:exp-ordering-lp}
\end{figure}

\textbf{Distribution on LP runs when Scaling $k$:} Here, we include data on the number of executions of the LP solver on the corpus used in Section 6.4, broken out by YES- and NO-instances.

\begin{figure}[b]
    \centering
    \includegraphics[scale=0.3]{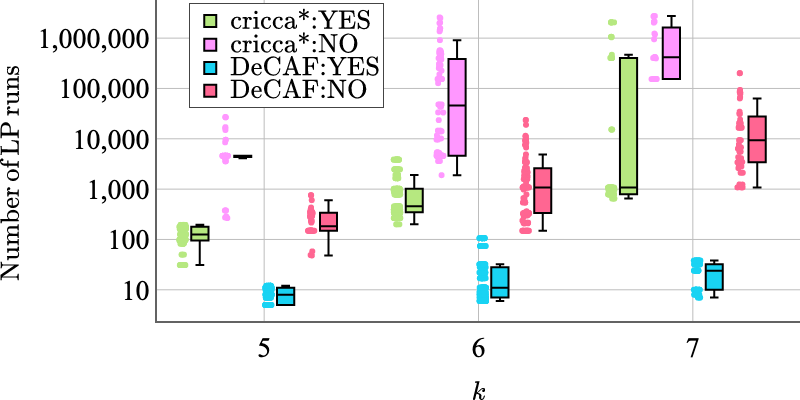}
    \includegraphics[scale=0.3]{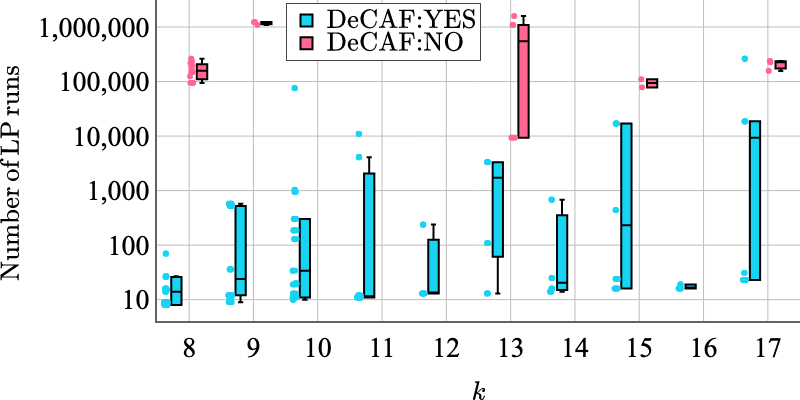}
    \captionof{figure}{%
        Distribution of number of LP runs (log-scale) for \criccas and \decaf on corpus of
        YES ($k_\text{in}=k$) and NO ($k_\text{in} = \lceil 0.8 k \rceil$) instances
        with seeds $[0,9]$ and $5 \leq k \leq 7$ (left), and \decaf only on $8 \leq k \leq 17$ (right).
    }
    \label{fig:exp-appendix-lp}
    \vspace{-1em}
\end{figure}

\end{document}